\newcommand{\Circ}{\operatorname{circ}}
\newcommand{\R}{\mathbb{R}}
\newcommand{\C}{\mathbb{C}}
\newcommand{\bb}[1]{\boldsymbol{#1}}
\newcommand{\T}{^\top}
\newcommand{\N}{\mathbb{N}}
\newcommand{\tridiag}{\operatorname{tridiag}}
\newcommand{\diag}{\operatorname{diag}}
\newcommand{\I}{\imath}
\newtheorem{theorem}{Theorem}[]
\newtheorem{lemma}[theorem]{Lemma}
\theoremstyle{definition}
\newtheorem{remark}[theorem]{Remark}
\newtheorem{example}[theorem]{Example}
\renewcommand{\baselinestretch}{1.2}
\begin{document}
\title{Valid parameter space of a bivariate Gaussian Markov random field with a generalized block-Toeplitz precision matrix}
\date{}
\author[1]{Mattia Molinaro\thanks{mattia.molinaro@math.uzh.ch}}
\author[1,2]{Reinhard Furrer\thanks{reinhard.furrer@math.uzh.ch}}
\affil[1]{Institute of Mathematics, University of Zurich, Winterthurerstrasse 190, Zurich, Switzerland}
\affil[2]{Institute for Computational Science, University of Zurich, Winterthurerstrasse 190, Zurich, Switzerland}
\maketitle
\begin{abstract}
Gaussian Markov random fields (GMRFs) are extensively used in statistics to model area-based data and usually depend on several parameters in order to capture complex spatial correlations. In this context, it is important to determine the valid parameter space, namely the domain ensuring (semi) positive-definiteness of the precision matrix. Depending on the structure of the latter, this task can be challenging. While univariate GMRFs with block-Toeplitz precision are well studied in the literature, not much is analytically known about bivariate GMRFs. So far, only restrictive sufficient conditions and brute-force approaches were proposed, which are computationally expensive for the size of modern datasets. In this paper, we consider a bivariate GMRF, which is part of a hierarchical model used in spatial statistics to analyze data coming from projections of regional climate change. By extending classical convergence results of univariate fields with toroidal boundary conditions to fields without boundary conditions, we provide asymptotically closed-form expressions of the valid parameter space. We develop a general methodology that can be used to determine the valid parameter space of bivariate \mbox{GMRFs} whose precision matrix has a generalized block-Toeplitz structure and for which classical convergence results are not directly applicable. Finally, we quantify the rate of convergence of our approach through a numerical study in \texttt{R}.
\bigskip

\noindent\textbf{Keywords: }positive-definiteness; minimum eigenvalue; eigenvalue approximation; generalized block-circulant structure; Bayesian inference
\bigskip

\noindent\textbf{MSC: } 15A18, 62M30

\bigskip
\end{abstract}

\renewcommand{\baselinestretch}{1.15}

\section{Introduction}
The analysis of large datasets with complicated spatio-temporal dependence structures is among the major challenges in modern statistics. With respect to this, sophisticated models are used in order to capture such dependencies. In \cite{opac-b1120661, panoramicaGenton}, an overview of these models is provided. In addition, computational efficiency is crucial, since the size of the datasets available to the scientific community has been growing steadily over the years. 

A broad class of the models currently in use in spatial statistics are Gaussian latent models, discussed in full detail in \cite{rue2009}. The case of observations on a lattice is of particular interest. Several case studies are discussed, for instance in \cite{GMRFbook}, and some related modeling and computational challenges are covered in \cite{paperFlorian}. In the case of Gaussian observations $\bb{y}$ over a (regular) lattice, a prototypical model is
\begin{subequations}\label{modellogenerale}
\begin{align}
\bb{y}\mid\bb{\beta},\bb{z},\sigma & \sim  \mathcal{N}(\bb{x}\T\bb{\beta}+\bb{z}, \sigma^2\bb{I}),
\label{primo layer}
\\
\bb{z}\mid\bb{\theta} & \sim  \mathcal{N}(\bb{0},\bb{Q}^{-1}(\bb{\theta})),
\label{secondo layer}
\end{align}
\end{subequations}
where $\bb{x}$ are known covariates. In the Bayesian setting (see, e.g., \citep{gelman2013bayesian}), prior distributions on the (hyper) parameters $\bb{\beta},\bb{\theta}$, and $\sigma$ are to be specified. In the previous model, the layer (\ref{primo layer}) accounts for the fact that the data $\bb{y}$ is assumed to be independent at different locations on the lattice, conditionally on a linear term $\bb{x}\T\bb{\beta}$ and a latent field $\bb{z}$. The layer (\ref{secondo layer}) models complicated spatial dependencies through the parameter $\bb{\theta}$. In this context, the spatial correlation structure can be specified through the off-diagonal non-zero elements of the precision matrix $\bb{Q}(\bb{\theta})$ of the random vector $\bb{z}$ (see, e.g., \cite{GMRFbook} for details). Equivalently, the model (\ref{modellogenerale}) can be specified in terms of the expected value of the Gaussian full-conditional distribution at each location, in the framework of a conditional autoregressive (CAR) model. The CAR models and their importance in spatial statistics are well-known and discussed in \cite{Mardia1988265}.

The ease of interpretability of the model (\ref{modellogenerale}) comes at a price. In the applications, at least three crucial aspects are to be considered. Depending on the structure and the size of the precision matrix $\bb{Q}(\bb{\theta})$, it can be computationally intensive to evaluate quadratic forms and log-determinants involving such matrices. Of course, this can lead to high computational costs when, for instance, implementing a Markov Chain Monte Carlo (MCMC) sampler for large datasets (see, e.g., \citep{gamerman1997markov}). It is then necessary to develop efficient computational strategies in order to evaluate the above-mentioned quantities. In the literature, the case of a univariate field over a regular lattice of size $n = n_1\times n_2$ with a block-Toeplitz precision is thoroughly covered. The main idea, discussed in \cite{Mardia1988265}, is to consider a lattice with toroidal boundary conditions. Under this approximation, the precision matrix becomes block-circulant and very efficient algorithms for the latter class of matrices can be used. They are based on the (multidimensional) fast Fourier transform (FFT), discussed in full detail in \cite{libroFFT}. The third aspect that has to be considered is to determine the valid parameter space for the model~(\ref{modellogenerale}), namely the set of parameters for which the precision matrix $\bb{Q}(\bb{\theta})$ is (semi) positive-definite. The importance of this task is pointed out, for instance in \cite{spam, GMRFbook}. Once again, the solution of this problem is essential if we want to implement a MCMC sampler or an optimization algorithm for the maximum likelihood estimation for a model of the form (\ref{modellogenerale}). In the former case, we need to know the set over which we can impose a prior distribution on $\bb{\theta}$, whereas in the latter case the domain of the function to be optimized must be determined. To our knowledge, in the literature, only specific precision matrix structures (block-Toeplitz matrices, discussed in \cite{GMRFbook}, and block-circulant, e.g., in \cite{Xu12abayesian}), and univariate fields (e.g., in~\cite{paperunivariato}) are covered. The case of bivariate fields, on the contrary, was not analytically explored. So far, only approaches aimed at determining analytically tractable subsets of the domain, such as the diagonal dominance criterion, were considered, for example in \cite{GMRFbook}. The obtained results were, however, unsatisfactory in terms of coverage of the whole valid parameter space. Apart from this, the determination of the parameter space relied on brute-force \lq\lq trial-and-error\rq\rq\ approaches, based on the fact that a Hermitian matrix is positive-definite if and only if it admits a unique Cholesky factorization (see \cite{spam} for the details). However, these techniques are computationally expensive.

In this paper, we consider the model introduced in \cite{Sain:Furr:Cres:11} to analyze bivariate data coming from several regional climate models (RCMs). In particular, we focus on the associated bivariate GMRF, which is in the form of the layer (\ref{secondo layer}).
It depends on seven real parameters, namely $\phi, \rho_{11}, \rho_{12}, \rho_{21}, \rho_{22}, \tau_1$, and $\tau_2$. These in turn stem from the layer decomposition introduced in \cite{Sain:Furr:Cres:11}, which is an approach to interpret a multivariate GMRF as an instance of univariate GMRF. The main idea behind this concept is to work with one layer for each variable of interest. In this context, $\phi$ corresponds to within-location variability, whereas $\rho_{11}$ and $\rho_{22}$ describe within-variable variability. Furthermore, $\rho_{12}$ and $\rho_{21}$ account for the cross-location variability. These parameters correspond to the coefficients of a linear combination through which the expected values of the aforementioned full-conditional distributions are defined. Finally, $\tau_1^2$ and $\tau_2^2$ are the marginal variances of these bivariate distributions. Due to the different nature of these parameters, it natural to set the following:
\begin{gather*}
\bb{\theta} = (\phi,\rho_{11},\rho_{12},\rho_{21},\rho_{22})\T,\\
\bb{\tau} = (\tau_1,\tau_2)\T,
\end{gather*}
which induce the precision matrix $\bb{Q}_{n_1, n_2}(\bb{\theta},\bb{\tau})\in\R^{2n\times 2n}$. 

The main goal of this paper is to provide asymptotically closed-form expressions of the valid parameter space, namely
\begin{equation}
\label{insiemevalidita}
\bb{\Theta}_{n_1, n_2} = \left\{(\bb{\theta},\bb{\tau})\in\R^5\times\R^2\mid\bb{Q}_{n_1, n_2}(\bb{\theta},\bb{\tau})\succ \bb{0}\right\}.
\end{equation}
In the previous definition, we assumed strictly positive-definite precision matrices. Although in this paper we will mainly focus on a specific bivariate GMRF, one important goal will be to provide a methodology that can be extended to multivariate GMRFs, whose precision matrix has a generalized Toeplitz matrix for which no analytic results for the eigenvalues can be obtained. In other words, the aim of our methodology is to overcome the aforementioned lack of theoretical results for multivariate GMRFs. 

The structure of the paper follows. In Section \ref{The Model and Preliminary Results}, we will briefly describe the model outlined in \cite{Sain:Furr:Cres:11} and enumerate some preliminary results, which will be necessary later. In addition, extending the work presented in \cite{CIT-006}, we will develop a toroidal boundary condition approximation: $\widetilde{\bb{Q}}_{n_1,n_2}(\bb{\theta},\bb{\tau})$ of $\bb{Q}_{n_1,n_2}(\bb{\theta},\bb{\tau})$ in order to characterize the domain (\ref{insiemevalidita}) through
\begin{equation}
\label{insiemevaliditaperturbato}
\widetilde{\bb{\Theta}}_{n_1, n_2} = \big\{(\bb{\theta},\bb{\tau})\in\R^5\times\R^2\mid\widetilde{\bb{Q}}_{n_1, n_2}(\bb{\theta},\bb{\tau})\succ \bb{0}\big\}.
\end{equation}
The asymmetry in the parameters $\rho_{12}$ and $\rho_{21}$, which was introduced in  \cite{Sain2007226, Sain:Furr:Cres:11} in order to better describe spatial dependence, for instance in the aforementioned RCM data, will play a central role. The importance and feasibility of our approach in the applications will be further discussed. In Section~\ref{The Main Result}, we will state and formally prove our main result, namely the convergence—in a suitable sense—of $\widetilde{\bb{\Theta}}_{n_1, n_2}$ to $\bb{\Theta}_{n_1, n_2}$. This result will highlight the fact that considering a regular grid with toroidal boundary conditions is not only useful to evaluate log-determinants and quadratic forms but is also useful, for instance, to efficiently sample a prior distribution defined over the set~(\ref{insiemevalidita}). This in turn improves what is discussed in \cite{GMRFbook}, where a regular grid without boundary conditions is embedded in a bigger grid with toroidal boundary conditions. The limitations of other known convergence results for this problem (e.g., the weak convergence of sequences of matrices discussed in \cite{CIT-006}, the Szeg\"{o} theorem covered in \cite{opac-b1092435}, etc.) will also be discussed. In Section \ref{Assessment of the Rate of Convergence}, we will implement a thorough simulation study aimed at numerically assessing the rate of convergence of our approximation. We will additionally discuss some related aspects, which are important in the applications. Finally, in Section \ref{Discussion and Outlook}, we will provide some conclusive remarks.

\section{The Model and Preliminary Results}
\label{The Model and Preliminary Results}
We briefly outline the model introduced in \cite{Sain:Furr:Cres:11} in order to analyze bivariate data over a regular lattice of size $n = n_1\times n_2$, with particular emphasis on the structure of the precision matrix of the latent field $\bb{z}$ of equation (\ref{secondo layer}), which will play a crucial role in what follows. Up to a permutation of the rows and columns, the precision matrix of size $2n\times 2n$ can be rewritten as follows:
\begin{equation}
\label{matriceprecisione}
\bb{Q}_{n_1, n_2}(\bb{\theta},\bb{\tau}) = 
\begin{pmatrix}
\bb{\tau}_1^{-1} & \\
 & \bb{\tau}_2^{-1}
\end{pmatrix}
\begin{pmatrix}
\bb{T}_{n_1,n_2}(\rho_{11}, 1, \rho_{11}) & \bb{T}_{n_1,n_2}(\rho_{21}, \phi, \rho_{12}) \\
\bb{T}_{n_1,n_2}\T(\rho_{21}, \phi, \rho_{12}) & \bb{T}_{n_1,n_2}(\rho_{22}, 1, \rho_{22})
\end{pmatrix}
\begin{pmatrix}
\bb{\tau}_1^{-1} & \\
 & \bb{\tau}_2^{-1}
\end{pmatrix},
\end{equation}
where $\bb{\tau}_1 = \diag(\tau_1)\in\R^{n\times n}$, $\bb{\tau}_2 = \diag(\tau_2)\in\R^{n\times n}$ and $\diag(x)=x\,\bb{I}$.
The four blocks of the form $\bb{T}_{n_1,n_2}(\cdot)$ in the previous equation have size $n\times n$ and are block-Toeplitz, with the following structure:
\begin{equation}
\label{funzioneT}
\begin{aligned}
\bb{T}_{n_1,n_2}(x,y,z) & =  
\begin{pmatrix}
\tridiag(x,y,z) & \diag(z) &  &  &  \\
\diag(x) & \tridiag(x,y,z) & \diag(z) &  & \\
 & \diag(x) & \ddots & \ddots  &  \\
 &  & \ddots & \ddots & \diag(z)\\
 &  &  & \diag(x) &  \tridiag(x,y,z)
\end{pmatrix},
\end{aligned}
\end{equation}
where
\begin{equation*}
\begin{aligned}
\tridiag(x,y,z) & =  
\begin{pmatrix}
y & z &   &   &  \\
x & y & z &   &  \\
  & x & \ddots & \ddots & \\
  &   & \ddots & \ddots & z \\
  &   &        & x & y
\end{pmatrix}\in\R^{n_1\times n_1}.
\end{aligned}
\end{equation*}
Then, $\bb{T}_{n_1,n_2}(\rho_{11}, 1, \rho_{11})=\bb{T}_{n_1,n_2}(\rho_{22}, 1, \rho_{22})$ if and only if $\rho_{11}= \rho_{22}$. This means that $\bb{Q}_{n_1, n_2}(\bb{\theta},\bb{\tau})$ is in general not block-Toeplitz. This structure generalizes what is discussed in \cite{Kent1996379, GMRFbook}, where the precision (or, in turn, the variance-covariance) matrix is block-Toeplitz. Our precision matrix is also sparse, and the non-zero entries pattern is shown in the left panel of Figure~\ref{precisionepiccolaesempio}. It can be shown that $\bb{Q}_{n_1,n_2}(\bb{\theta},\bb{\tau})$ has at most $20n_1 n_2 - 8n_1 - 8n_2\in\Theta(n)$ non-zero entries. In this paper, we use the big theta notation, namely $f(n)\in\Theta(g(n))$ if and only if $f(n)$ is eventually bounded from both above and below by $g(n)$.  
\begin{figure}
\centering
\includegraphics[scale = 0.44, trim={0 2.6cm 0 3.cm},clip]{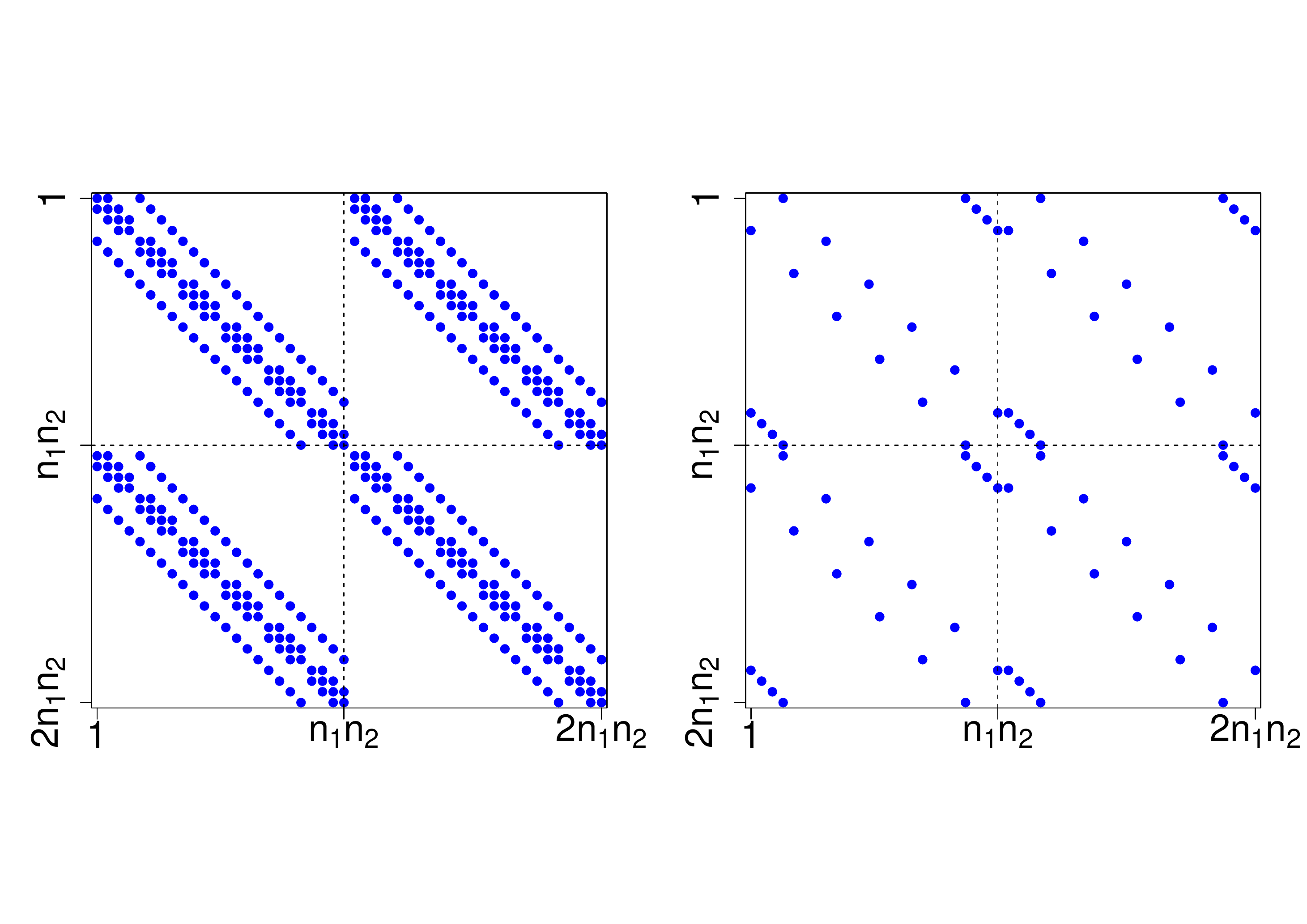}
\caption{The left and right panel show the sparsity pattern of the precision matrix (\ref{matriceprecisione}) and of the perturbation matrix $\Delta\bb{Q}_{n_1,n_2}(\bb{\theta})$, respectively. The blue points correspond to the non-zero entries of these matrices. In the display, $n_1 = 4$ and $n_2 = 6$.}  
\label{precisionepiccolaesempio}
\end{figure}

Recall that the first goal of this study is to characterize the valid parameter space~(\ref{insiemevalidita}). In the literature (e.g., in \cite{GMRFbook}) the diagonal dominance criterion is used in order to determine analytically tractable subsets. More precisely, it is well-known that an Hermitian matrix $\bb{A}\in\C^{n\times n}$ such that $\big\vert(\bb{A})_i^i\big\vert\geq\sum_{j\neq i} \big\vert(\bb{A})_i^j\big\vert$, for all $i = 1,2,\dots,n$, is semipositive-definite (a formal proof of this statement is provided in \cite{libroalgebralineare}). However, the converse does not hold in general. In order to highlight the importance of the convergence result that we will discuss in the next section, we will preliminarily deal with the limitations of the diagonal dominance criterion for the precision matrix (\ref{matriceprecisione}). With respect to this, we uniformly drew 500,000 values of $\bb{\theta}$ belonging to the valid parameter space (\ref{insiemevalidita}) with $n_1 = n_2 = 100$. Among these, we determined the ones also satisfying the diagonal dominance criterion. The ratio between the latter and the former was 0.1288. This rate of coverage is clearly unsatisfactory and agrees with \cite{GMRFbook} for univariate GMRFs. In addition, in order to attain a partial understanding of the geometrical structure of the set (\ref{insiemevalidita}), we additionally drew uniformly four values of $(\phi,\rho_{11},\rho_{22})\T$ from the set (\ref{insiemevalidita}). For each of them, we uniformly sampled 10,000 values of $(\rho_{12},\rho_{21})\T$ such that $(\phi,\rho_{11},\rho_{22}, \rho_{12},\rho_{21})\T$ belong to (\ref{insiemevalidita}). We chose this approach since it is relatively easy to devise closed-form bounds for each of the parameters $\phi,\rho_{11}$, and $\rho_{22}$, whereas the same does not hold for $\rho_{12}$ and $\rho_{21}$. The obtained results are displayed in Figure \ref{figuradd}. The blue points correspond to values of the parameters belonging to the set (\ref{insiemevalidita}). The ones also satisfying the diagonal dominance criterion are highlighted in orange.
Several features are striking. First, there are values of $(\phi,\rho_{11},\rho_{22})\T$ for which the diagonal dominance criterion provides no coverage. Even when it does, the coverage rate is clearly unsatisfactory. Second, the region determined by this criterion is symmetric around the origin, while the region determined by the valid points is generally not symmetric around this point. In other words, the diagonal dominance criterion is not able to capture the geometrical structure of the valid parameter space (\ref{insiemevalidita}).
\begin{figure}
\centering
\includegraphics[scale = 0.58]{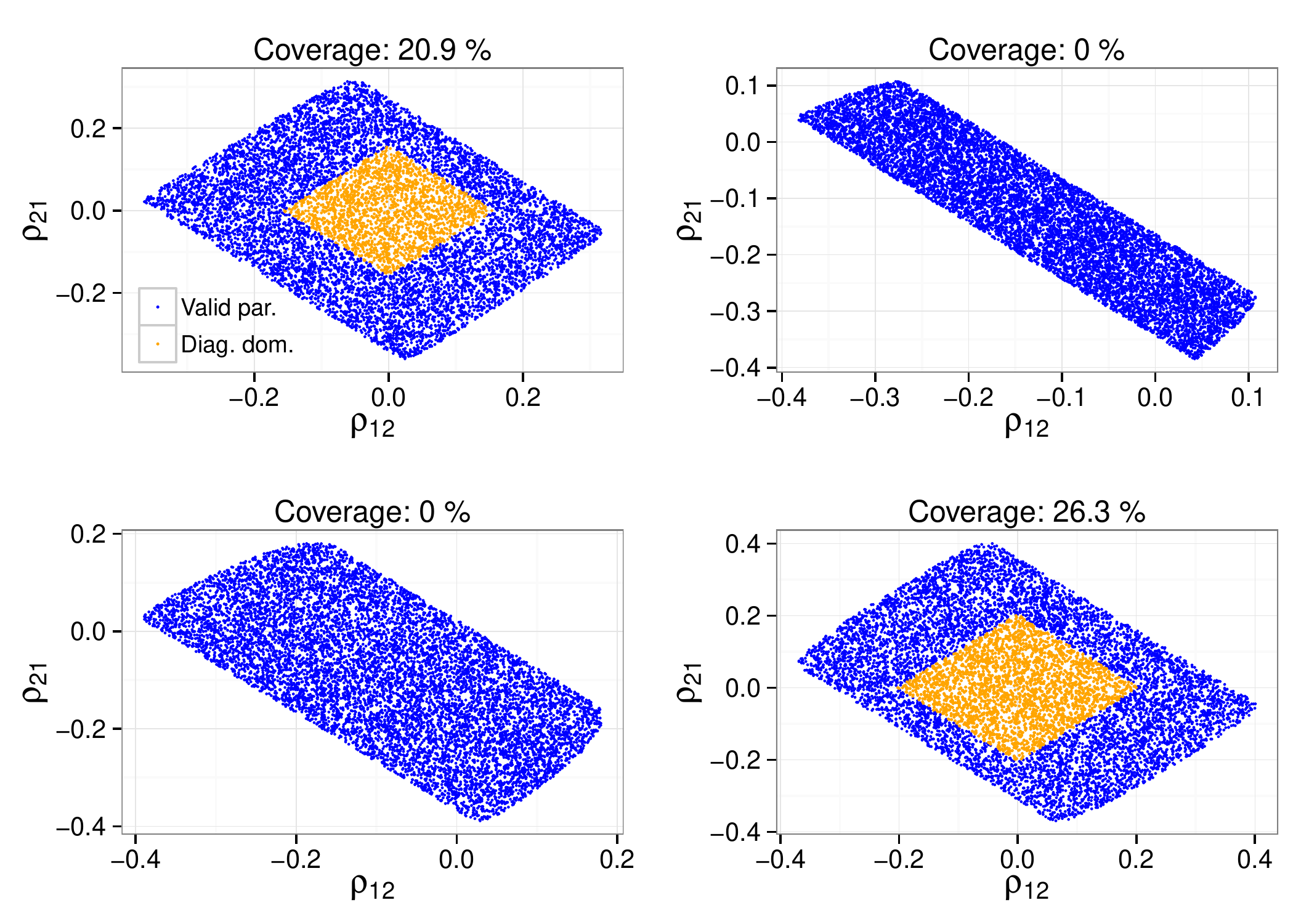}
\caption{in each panel, 10,000 values of $(\rho_{12},\rho_{21})\T$ were uniformly drawn from the valid parameter space (\ref{insiemevalidita}) conditional on a value of $(\phi,\rho_{11},\rho_{22})\T$. The orange points correspond to values of $\bb{\theta}$ also fulfilling the diagonal dominance criterion. The ratio between the orange and blue points is displayed on top of each panel.}
\label{figuradd}
\end{figure}

As a consequence, we need to resort to a different approach to describe the valid parameter space (\ref{insiemevalidita}). The latter problem is equivalent to determining the values of $\bb{\theta}$ and $\bb{\tau}$ such that the minimum eigenvalue of $\bb{Q}_{n_1, n_2}(\bb{\theta},\bb{\tau})$ is strictly positive. Consequently, we will focus on the spectrum of $\bb{Q}_{n_1, n_2}(\bb{\theta}, \bb{\tau})$, which will be denoted with $\sigma(\bb{Q}_{n_1, n_2}(\bb{\theta}, \bb{\tau}))$. In addition, we will adhere to the convention of enumerating the eigenvalues in decreasing order, namely $\lambda_1(\bb{Q}_{n_1, n_2}(\bb{\theta}, \bb{\tau}))\geq \lambda_2(\bb{Q}_{n_1, n_2}(\bb{\theta}, \bb{\tau}))\geq\dots\geq \lambda_{2n}(\bb{Q}_{n_1, n_2}(\bb{\theta}, \bb{\tau}))$. From equation (\ref{matriceprecisione}), it is clear that $\bb{Q}_{n_1, n_2}(\bb{\theta},\bb{\tau})$ is positive-definite only if $\tau_1,\tau_2 > 0$. Henceforth, we will only consider the inner matrix of equation (\ref{matriceprecisione}), which depends only on $\bb{\theta}$. Moreover, we will drop the dependence on $\bb{\tau}$ in order to keep the notation as straightforward as possible.

No closed-form formulas for the eigenvalues of $\bb{Q}_{n_1, n_2}(\bb{\theta})$ are available. More specifically, only results for banded block-Toeplitz matrices are discussed (see, e.g., \cite{blocktoeplitzspectra}). In the special case $\rho_{12} = \rho_{21}$, it can be shown that there exists an orthogonal matrix $\bb{V}_{n_1,n_2}\in\R^{n\times n}$ not depending on $\bb{\theta}$ such that
\begin{equation}
\label{caso inutile}
\begin{pmatrix}
\bb{V}_{n_1,n_2}\T & \\
 & \bb{V}_{n_1,n_2}\T
\end{pmatrix}\,
\bb{Q}_{n_1,n_2}(\bb{\theta})\,
\begin{pmatrix}
\bb{V}_{n_1,n_2} & \\
 & \bb{V}_{n_1,n_2}
\end{pmatrix} = 
\begin{pmatrix}
\bb{\Lambda}_{11}(\bb{\theta}) & \bb{\Lambda}_{12}(\bb{\theta})\\
\bb{\Lambda}_{12}(\bb{\theta}) & \bb{\Lambda}_{22}(\bb{\theta})
\end{pmatrix},
\end{equation}
where $\bb{\Lambda}_{ij}(\bb{\theta})$ is a diagonal matrix whose diagonal elements are the  eigenvalues of the four $n\times n$ symmetric blocks into which the inner matrix of precision (\ref{matriceprecisione}) is partitioned. In other words, the latter are simultaneously diagonalizable through $\bb{V}_{n_1,n_2}$. This property is a consequence of the fact that
\begin{eqnarray*}
\bb{T}_{n_1,n_2}(x,y,x) & = & y\left(\bb{I}_{n_2}\otimes\bb{I}_{n_1}\right)+x\left(\bb{S}_{n_2}\otimes\,\bb{I}_{n_1}\right)+
x\left(\bb{I}_{n_2}\otimes\bb{S}_{n_1}\right),\\
\bb{S}_n & = & \tridiag(1, 0, 1),
\end{eqnarray*}
as also observed in \cite{paperunivariato} when dealing with univariate GMRFs with a second-order neighborhood. Up to a permutation of the rows and columns, the right-hand-side of the equation (\ref{caso inutile}) is a symmetric block-diagonal matrix whose diagonal blocks have size $2\times 2$. At this point, the determination of $\sigma(\bb{Q}_{n_1,n_2}(\bb{\theta}))$ is straightforward.

In the general case in which $\rho_{12}\neq\rho_{21}$, the four $n\times n$ blocks of the precision (\ref{matriceprecisione}) are not simultaneously diagonalizable. Therefore, the spectrum cannot be readily devised. The main idea to overcome this lack of analytical results is to introduce a perturbed precision matrix $\widetilde{\bb{Q}}_{n_1, n_2}(\bb{\theta})\in\R^{2n\times 2n}$, which can be partitioned into four $n\times n$ blocks which are block-circulant. This will extend the approach used in \cite{CIT-006} and is equivalent to toroidal boundary conditions on the underlying lattice. 

The perturbation is carried out as follows. Bearing in mind how the function $\bb{T}_{n_1,n_2}(\cdot)$ was defined in (\ref{funzioneT}), we see that all but its sub-blocks on the main diagonal are already circulant. On the other hand, the sub-blocks on the main diagonal are by construction tridiagonal and hence can be made circulant in a natural way as follows:
\begin{equation*}
\tridiag(x,y,z) 
\mapsto
\Circ(x,y,z) = 
\begin{pmatrix}
y & z &   &   &  x\\
x & y & z &   &  \\
  & x & \ddots & \ddots & \\
  &   & \ddots & \ddots & z \\
z  &   &        & x & y
 
\end{pmatrix}.
\end{equation*}
We explicitly point out that this method of turning a tridiagonal (more generally, non-circulant Toeplitz) matrix into a circulant one is not unique. In the literature, several approaches are discussed (see, e.g., \cite{ideaminimizzarefrobenius}). However, in our case, a simulation study provided strong empirical evidence of the fact that more sophisticated approaches are advantageous only for small choices of the grid size $n$. As we are interested in an asymptotic description of $\sigma(\bb{Q}_{n_1, n_2}(\bb{\theta}))$, we decided to stick to the natural approach. We then set:
\begin{equation}
\label{bcapproximation}
\begin{aligned}
\bb{\widetilde{Q}}_{n_1,n_2}(\bb{\theta}) & = 
\begin{pmatrix}
\bb{C}_{n_1,n_2}(\rho_{11}, 1, \rho_{11}) & \bb{C}_{n_1,n_2}(\rho_{21}, \phi, \rho_{12}) \\
\bb{C}_{n_1,n_2}\T(\rho_{21}, \phi, \rho_{12}) & \bb{C}_{n_1,n_2}(\rho_{22}, 1, \rho_{22})
\end{pmatrix}\in\R^{2n\times 2n},\\
\bb{C}_{n_1,n_2}(x,y,z) & = 
\begin{pmatrix}
\Circ(x,y,z) & \diag(z) &  &  & \diag(x)\\
\diag(x) & \Circ(x,y,z) & \diag(z) &  & \\
 & \diag(x) & \ddots & \ddots  &  \\
 &  & \ddots & \ddots & \diag(z)\\
\diag(z) &  &  & \diag(x) &  \Circ(x,y,z)
\end{pmatrix}\in\R^{n\times n}.
\end{aligned}
\end{equation}
It follows that the blocks $\bb{C}_{n_1,n_2}(\cdot)$ are block-circulant matrices of size $n$. Similarly to what was observed above for $\bb{Q}_{n_1,n_2}(\bb{\theta})$, the perturbed matrix (\ref{bcapproximation}) is block-circulant if and only if $\rho_{11} = \rho_{22}$.

Now, let $\Delta\bb{Q}_{n_1, n_2}(\bb{\theta}) = \bb{\widetilde{Q}}_{n_1, n_2}(\bb{\theta}) - \bb{{Q}}_{n_1,n_2}(\bb{\theta})$ be the perturbation matrix (see the right panel of Figure \ref{precisionepiccolaesempio} to inspect its structure). It has at most $8(n_1+n_2)\in\Theta(n_1+n_2)$ non-zero entries. This implies that the limit for $n_1,n_2\to +\infty$ of the ratio between the number of non-zero elements of $\bb{Q}_{n_1,n_2}(\bb{\theta})$ and of  $\Delta\bb{Q}_{n_1,n_2}(\bb{\theta})$ is infinite. This property will be crucial in Lemma~\ref{lemmafacile}, when we will prove our main convergence result. Apart from this, the trace of $\Delta\bb{Q}_{n_1,n_2}(\bb{\theta})$ is zero, and its rank is greater than zero, unless ${\rho_{11} = \rho_{12} = \rho_{21} = \rho_{22} = 0}$. This implies that $\Delta\bb{Q}_{n_1, n_2}(\bb{\theta})$ is an indefinite matrix; namely that it has both strictly positive and strictly negative eigenvalues. Explicit formulas for these eigenvalues can be symbolically derived, as only their algebraic multiplicity changes when $n$ grows, but they are not reported here, as they are rather complicated and not necessary for what follows. It is also important to point out that the Weyl's inequalities \citep{paperoriginaleWeyl} imply that
\begin{equation*}
\lambda_{2n}(\bb{Q}_{n_1, n_2}(\bb{\theta})) +\lambda_{2n}(\Delta\bb{Q}_{n_1, n_2}(\bb{\theta}))
\leq
\lambda_{2n}(\widetilde{\bb{Q}}_{n_1, n_2}(\bb{\theta}))
\leq
\lambda_{2n}(\bb{Q}_{n_1, n_2}(\bb{\theta})) +\lambda_{1}(\Delta\bb{Q}_{n_1, n_2}(\bb{\theta})).
\end{equation*}
These bounds are however too loose to approximate the valid parameter space (\ref{insiemevalidita}).

We now turn to the study of $\sigma(\widetilde{\bb{Q}}_{n_1,n_2}(\bb{\theta}))$, which will allow us to characterize the set~(\ref{insiemevaliditaperturbato}). In the case of a block-circulant of size $n_1\,n_2\times n_1\,n_2$ whose sub-blocks have size $n_1\times n_1$, the eigenvalues can be efficiently computed using the bidimensional FFT, with a computational complexity of $\Theta(n_1\,n_2\,\log{(n_1\,n_2)})$, as discussed in \cite{libroFFT}. For the perturbed precision (\ref{bcapproximation}), we obtain:
\begin{align}
\begin{split}
\label{autovaloriblocco12}
\sigma(\bb{C}_{n_1,n_2}(x,y,z)) = & \left\{y + z\,\exp{\left(-2\pi\I\left(\frac{i}{n_2} + \frac{j}{n_1}\right)\right)} + \right.\\
  & \left.
+x\,
\exp{\left(-2\pi\I\left(\frac{i(n_2-1)}{n_2} + \frac{j(n_1-1)}{n_1}\right)\right)}
\right\}_{i\in I,\ j\in J},
\end{split}
\end{align}
where $\I$ is the imaginary unit, $I = \{0, 1, 2,\dots, n_2-1\}$ and $J = \{0,1,2,\dots,n_1-1\}$. This result readily provides the eigenvalues of the four $n\times n$ blocks of (\ref{bcapproximation}). In particular, the eigenvalues of the two diagonal blocks are real, since the latter are by construction symmetric. For instance, for the top-left block of (\ref{bcapproximation}) it holds that
\begin{equation}
\label{autovaloriblocco11}
\sigma(\bb{C}_{n_1,n_2}(\rho_{11},1,\rho_{11})) = \left\{1+2\rho_{11}\left(\cos{\left(\frac{2\pi i}{n_2}\right)} + \cos{\left(\frac{2\pi j}{n_1}\right)} \right)\right\}_{i\in I,\ j\in J}.
\end{equation}
We are now ready to explicitly derive $\sigma(\widetilde{\bb{Q}}_{n_1,n_2}(\bb{\theta}))$. Let
\begin{equation}
\label{matriceU}
\bb{U}_{n_1,n_2} = \bb{F}_{n_1}^{(n_2)}{\bb{P}_{n_1}^{(n_2)}}\bb{F}_{n_2}^{(n_1)}\bb{P}_{n_2}^{(n_1)},
\end{equation}
where, after having set $\omega = \exp{(-2\pi\I/n)}$:
\begin{align*}
\bb{F}_n = &\, \frac{1}{\sqrt{n}}
\begin{pmatrix}
1 & 1 & 1 & \dots & 1\\
1 & \omega & \omega^2 & \dots & \omega^{n-1}\\
1 & \omega^2 & \omega^4 & \dots & \omega^{2(n-1)}\\
\vdots & \vdots & \vdots & & \vdots\\
1 & \omega^{n-1} & \omega^{2(n-1)} & \dots & \omega^{(n-1)(n-1)}
\end{pmatrix},\\
\bb{F}_{n}^{(m)} = & \,\bb{I}_{2m}\otimes\bb{F}_n,
\end{align*}
and $\bb{P}_{n}^{(m)}\in\R^{2mn\times 2mn}$ is a suitable permutation matrix. Of course, $\bb{F}_n$ is the normalized discrete Fourier transform (DFT) matrix, discussed in full detail in \cite{libroFFT}. Closed-form expression for $\bb{U}_{n_1,n_2}$ can be obtained, but once again they are not reported because they are not strictly necessary for what follows. It will suffice to observe that its non-zero entries are generally complex, with absolute value $1/{\sqrt{n_1n_2}}$. The non-zero entry pattern is displayed in Figure \ref{trasformazioneblocchetti}.
\begin{figure}
\centering
\vspace*{-6mm}
\includegraphics[scale = 0.38]{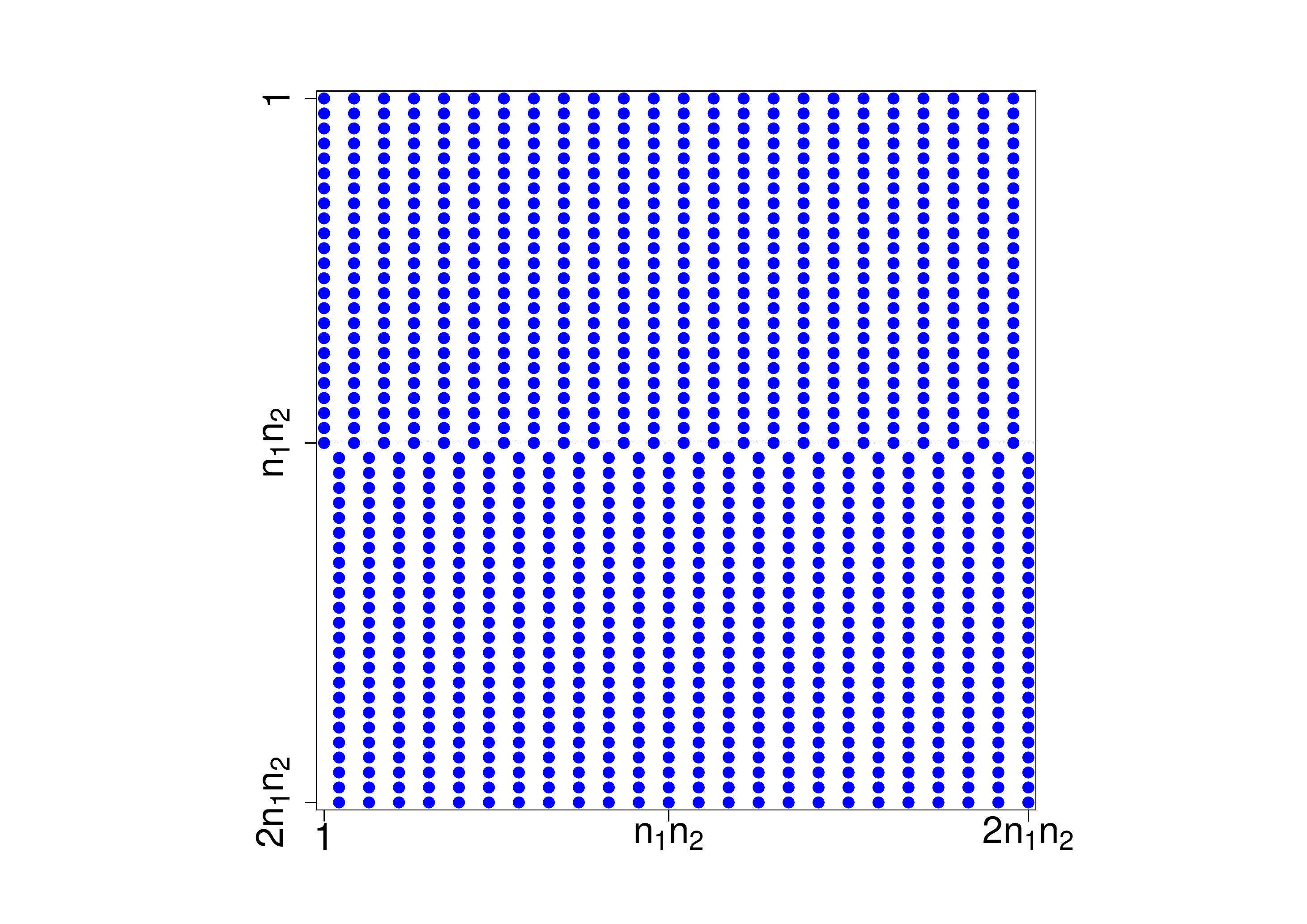}
\vspace*{-6mm}
\caption{non-zero entry pattern of $\bb{U}_{n_1,n_2}$.}
\label{trasformazioneblocchetti}
\end{figure}
In addition, $\bb{U}_{n_1,n_2}$ is by construction unitary, does not depend on $\bb{\theta}$, and is such that
\begin{equation}
\label{diagonalizzazioneblocchetti2x2}
\bb{U}_{n_1,n_2}^*\,
\widetilde{\bb{Q}}_{n_1,n_2}(\bb{\theta})\,
\bb{U}_{n_1,n_2} = 
\operatorname{blkdiag}(
\begin{pmatrix}
\widetilde{\lambda}_{i,j}^{(n)}(\rho_{11}, 1, \rho_{11}) & \widetilde{\lambda}_{i,j}^{(n)}(\rho_{21}, \phi, \rho_{12})\\
\widetilde{\lambda}_{i,j}^{(n)}(\rho_{12}, \phi, \rho_{21}) & \widetilde{\lambda}_{i,j}^{(n)}(\rho_{22}, 1, \rho_{22})
\end{pmatrix}),
\end{equation}
where $^*$ denotes the complex conjugate operator, $i\in I, j\in J$ and $\widetilde{\lambda}_{i,j}^{(n)}(x,y,z)$  is the $(i,j)$th-~eigenvalue of the block-circulant block $\bb{C}_{n_1,n_2}(x,y,z)$ of size $n$ belonging to the matrix (\ref{bcapproximation}). From the previous result, we finally obtain closed-form expressions for $\sigma(\widetilde{\bb{Q}}_{n_1,n_2}(\bb{\theta}))$, namely
\begin{equation}
\begin{multlined}
\label{autovaloribc}
\frac{1}{2}\Bigg({\widetilde{\lambda}_{i,j}^{(n)}(\rho_{11}, 1, \rho_{11})+\widetilde{\lambda}_{i,j}^{(n)}(\rho_{22}, 1, \rho_{22})}\\
\pm{\sqrt{\left(\widetilde{\lambda}_{i,j}^{(n)}(\rho_{11}, 1, \rho_{11}) - \widetilde{\lambda}_{i,j}^{(n)}(\rho_{22}, 1, \rho_{22})\right)^2+4\big\vert\widetilde{\lambda}_{i,j}^{(n)}(\rho_{21}, \phi, \rho_{12})\big\vert^2}}\Bigg),
\end{multlined}
\end{equation}
From equations (\ref{matriceU}) and (\ref{diagonalizzazioneblocchetti2x2}), several important results follow. First, a unitary matrix $\bb{V}_{n_1,n_2}(\bb{\theta})$ of size $2n\times 2n$ exists, such that the perturbed precision matrix $\widetilde{\bb{Q}}_{n_1,n_2}(\bb{\theta})$ can be diagonalized by the transformation $\bb{U}_{n_1,n_2}\bb{V}_{n_1,n_2}(\bb{\theta})$. It clearly follows that the columns of $\bb{U}_{n_1,n_2}\bb{V}_{n_1,n_2}(\bb{\theta})$ are eigenvectors of $\widetilde{\bb{Q}}_{n_1,n_2}(\bb{\theta})$. In addition, each column of $\bb{V}_{n_1,n_2}(\bb{\theta})$ contains exactly two non-zero entries. They correspond to an eigenvector of one of the $n_1n_2$ blocks of size $2\times 2$ of equation (\ref{diagonalizzazioneblocchetti2x2}). This readily provides an estimate of the absolute value of the entries of $\bb{U}_{n_1,n_2}\bb{V}_{n_1,n_2}(\bb{\theta})$. Without loss of generality, we assume that the two non-zero entries of $(\bb{V}_{n_1,n_2}(\bb{\theta}))^m$ are its two first components. Then,
\begin{eqnarray}
\nonumber
\vert(\bb{U}_{n_1,n_2}\bb{V}_{n_1,n_2}(\bb{\theta}))_l^m\vert &
= & \vert(\bb{U}_{n_1,n_2})_l\, (\bb{V}_{n_1,n_2}(\bb{\theta}))^m\vert\\
\nonumber
& = & \vert (\bb{U}_{n_1,n_2})_l^1\, (\bb{V}_{n_1,n_2}(\bb{\theta}))_1^m + (\bb{U}_{n_1,n_2})_l^2 \,(\bb{V}_{n_1,n_2}(\bb{\theta}))_2^m\vert\\
\nonumber
& \leq & \frac{1}{\sqrt{n_1n_2}}(\vert(\bb{V}_{n_1,n_2}(\bb{\theta}))_1^m \vert + \vert (\bb{V}_{n_1,n_2}(\bb{\theta}))_2^m\vert)\\
\nonumber
& \leq & \frac{\sqrt{2}}{\sqrt{n_1n_2}}\\
& \in & \Theta\left(\frac{1}{\sqrt{n_1n_2}}\right),
\label{boundDiagonalizzazionePerturbata}
\end{eqnarray}
where $l,m=1,2,\dots,2n$. The first inequality holds due to the triangular inequality and the above stated norm of the entries of $\bb{U}_{n_1,n_2}$, while the second one holds since $\bb{V}_{n_1,n_2}(\bb{\theta})$ is unitary and hence $\vert(\bb{V}_{n_1,n_2}(\bb{\theta}))_1^m \vert^2 + \vert (\bb{V}_{n_1,n_2}(\bb{\theta}))_2^m\vert^2 = 1$.  

To summarize, in this section, we solved the analogous of problem (\ref{insiemevalidita}) for the perturbed matrix~(\ref{bcapproximation}). In other words, we found closed-form expressions for the set (\ref{insiemevaliditaperturbato}). In fact, it is enough to impose that all $2n$ eigenvalues (\ref{autovaloribc}) are strictly positive. The next step is to prove asymptotically that $\bb{Q}_{n_1,n_2}(\bb{\theta})$ is positive-definite if and only if this property holds for~$\widetilde{\bb{Q}}_{n_1,n_2}(\bb{\theta})$.

We conclude this section by discussing the importance of the knowledge of the eigenvalues~(\ref{autovaloribc}) in the applications. The main result in this context is that we can test whether $\bb{\theta}$ belongs to the perturbed valid parameter space (\ref{insiemevaliditaperturbato}) by numerically checking that the eigenvalues (\ref{autovaloribc}) evaluated at $\bb{\theta}$ are positive. Once again, in the next section, we will show that this task is asymptotically equivalent to testing whether $\bb{\theta}$ belongs to (\ref{insiemevalidita}). Here, we highlight the usefulness of the closed-form expressions (\ref{autovaloribc}) in the applications from a general perspective. A uniform prior distribution on $\bb{\theta}$ has the following form:
\begin{equation*}
\pi(\bb{\theta})\propto\prod_{i = 1}^{2n} \bb{1}(\lambda_i(\bb{Q}_{n_1,n_2}(\bb{\theta})) >0 ),
\end{equation*}
where $\bb{1}(\cdot)$ is the indicator function. We need to evaluate all the $n$ terms of the form $\bb{1}(\lambda_i(\bb{Q}_{n_1,n_2}(\bb{\theta})) >0 )$ because the analytical expression of $\lambda_{2n}(\bb{Q}_{n_1,n_2}(\bb{\theta}))$ depends on complicated conditions on the components of $\bb{\theta}$. Through equation (\ref{autovaloribc}), this task can be accomplished with complexity $\Theta(n)$. The gain in computational efficiency is substantial. We benchmarked our approach against using the \texttt{R} package \texttt{spam} \cite{spam}, which provides numerical routines for sparse matrix algebra. We considered three grid sizes, namely $100\times 100$, $200\times 200$, and $300\times 300$. We observe that, if $\bb{\theta}$ induces a positive-definite precision matrix, it is then necessary to construct the latter (for instance, to evaluate the corresponding log-likelihood function), although our method in principle does not require this construction. On the contrary, if our approach determines that $\bb{\theta}$ does not belong to (\ref{insiemevalidita}), then it is not necessary to construct the corresponding precision, as the drawn $\bb{\theta}$ is to be discarded. In order to meaningfully assess both scenarios, for each of the aforementioned grid sizes, we drew 50 values of $\bb{\theta}$ belonging to the valid parameter (\ref{insiemevalidita}) and 50 values not belonging to this domain and recorded the correspondent computational times. For each of these 100 parameter values, we recorded the computational time in both \texttt{spam} and our approach. This task was accomplished through the \texttt{R} package \texttt{microbenchmark}~\cite{microbenchmark}. In the case of the 50 valid parameter values and for the three considered grid sizes, the median of the recorded computational times for \texttt{spam} was 2.6, 5.9, and 11.6 times the median obtained with our approach. The performance of our approach clearly improves when the grid size increases. In the case of the non-valid parameters, for which it was not necessary to construct the precision matrix, the median of \texttt{spam} was 40.6, 84.7, and 208.7 times the median of our approach. These timings can be further improved by exploiting the structure of the set (\ref{insiemevaliditaperturbato}), which is analytically provided by equation (\ref{autovaloribc}). It can be shown that, if we assume without loss of generality that $n_1<n_2$, the eigenvalues~(\ref{autovaloribc}) attain exactly $n_2$ local minima. It follows that we only need to evaluate $n_2$ elements of the form $\bb{1}(\lambda_i(\bb{Q}_{n_1,n_2}(\bb{\theta})) >0 )$ in order to test whether $\bb{\theta}$ belongs to the set~(\ref{insiemevaliditaperturbato}). From all the above-mentioned results, it is striking that our method outperforms the numerical routines of \texttt{spam}. This property is crucial, for instance, in an MCMC sampler, where the task of determining whether $\bb{\theta}$ is valid is usually repeated hundreds of thousands of times. Further aspects of our methodology are discussed in Section \ref{Discussion and Outlook}.
\section{The Main Result}
\label{The Main Result}
\todo{I would \emph{not} include Gray's weak convergence in this manuscript, for reasons to \textbf{be discussed}}
In the previous section, the perturbation $\widetilde{\bb{Q}}_{n_1,n_2}(\bb{\theta})$ of $\bb{Q}_{n_1,n_2}(\bb{\theta})$ was introduced. We will now formally prove our main result.
\begin{theorem}
\label{Reinhardconjecture}
Let $\bb{\theta}\in\R^5$. Then, for the precision matrices (\ref{matriceprecisione}) and (\ref{bcapproximation}), it holds that
\begin{equation*}
\lim_{n_1,n_2\to +\infty}\big\vert{\lambda_{2n}(\bb{Q}_{n_1,n_2}(\bb{\theta})) - \lambda_{2n}(\widetilde{\bb{Q}}_{n_1,n_2}(\bb{\theta}))}\big\vert  = 0.
\end{equation*}
\end{theorem}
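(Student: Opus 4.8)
The plan is to sandwich $\lambda_{2n}(\bb{Q}_{n_1,n_2}(\bb{\theta}))$ between $\lambda_{2n}(\widetilde{\bb{Q}}_{n_1,n_2}(\bb{\theta}))$ plus a vanishing error, from both sides. Throughout I write $\lambda_{\min}(\cdot)$ for the smallest eigenvalue of a Hermitian matrix, which for the matrices in (\ref{matriceprecisione}) and (\ref{bcapproximation}) is $\lambda_{2n}(\cdot)$ in the paper's convention. The upper estimate $\lambda_{2n}(\bb{Q}_{n_1,n_2}(\bb{\theta}))\le\lambda_{2n}(\widetilde{\bb{Q}}_{n_1,n_2}(\bb{\theta}))+o(1)$ is the ``easy'' half and exploits that every eigenvector of $\widetilde{\bb{Q}}_{n_1,n_2}(\bb{\theta})$ is spread out together with the sparsity of $\Delta\bb{Q}_{n_1,n_2}(\bb{\theta})$. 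For the reverse inequality, a plain Weyl estimate only yields $\lambda_{2n}(\widetilde{\bb{Q}}_{n_1,n_2}(\bb{\theta}))\le\lambda_{2n}(\bb{Q}_{n_1,n_2}(\bb{\theta}))+\Vert\Delta\bb{Q}_{n_1,n_2}(\bb{\theta})\Vert_2$, and $\Vert\Delta\bb{Q}_{n_1,n_2}(\bb{\theta})\Vert_2$ does not vanish (the perturbation is indefinite with $\Theta(1)$ eigenvalues); I would instead realise $\bb{Q}_{n_1,n_2}(\bb{\theta})$ as a principal submatrix of a slightly enlarged circulant-type matrix and apply Cauchy's interlacing theorem.

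\textbf{Upper bound (the easy lemma).} Let $\bb{\tilde v}$ be a unit eigenvector of $\widetilde{\bb{Q}}_{n_1,n_2}(\bb{\theta})$ for $\lambda_{2n}(\widetilde{\bb{Q}}_{n_1,n_2}(\bb{\theta}))$. By the diagonalisation (\ref{matriceU})--(\ref{diagonalizzazioneblocchetti2x2}) and the entrywise bound (\ref{boundDiagonalizzazionePerturbata}), every component of $\bb{\tilde v}$ has modulus at most $\sqrt{2}/\sqrt{n_1n_2}$. Since $\Delta\bb{Q}_{n_1,n_2}(\bb{\theta})$ has at most $8(n_1+n_2)$ non-zero entries, each bounded in modulus by a constant $c(\bb{\theta})$ depending only on the entries of $\bb{\theta}$, the triangle inequality gives
\begin{equation*}
\big\vert\bb{\tilde v}^{*}\,\Delta\bb{Q}_{n_1,n_2}(\bb{\theta})\,\bb{\tilde v}\big\vert\le 8(n_1+n_2)\,c(\bb{\theta})\,\frac{2}{n_1n_2}=16\,c(\bb{\theta})\Big(\tfrac{1}{n_1}+\tfrac{1}{n_2}\Big)\xrightarrow[n_1,n_2\to+\infty]{}0 .
\end{equation*}
Hence, by the Rayleigh quotient,
\begin{equation*}
\lambda_{2n}(\bb{Q}_{n_1,n_2}(\bb{\theta}))\le\bb{\tilde v}^{*}\bb{Q}_{n_1,n_2}(\bb{\theta})\bb{\tilde v}=\lambda_{2n}(\widetilde{\bb{Q}}_{n_1,n_2}(\bb{\theta}))-\bb{\tilde v}^{*}\Delta\bb{Q}_{n_1,n_2}(\bb{\theta})\bb{\tilde v}\le\lambda_{2n}(\widetilde{\bb{Q}}_{n_1,n_2}(\bb{\theta}))+16\,c(\bb{\theta})\Big(\tfrac{1}{n_1}+\tfrac{1}{n_2}\Big).
\end{equation*}
I would record this half as a lemma; it uses precisely the fact, stressed after (\ref{bcapproximation}), that the ratio of the numbers of non-zero entries of $\bb{Q}_{n_1,n_2}(\bb{\theta})$ and of $\Delta\bb{Q}_{n_1,n_2}(\bb{\theta})$ diverges.

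\textbf{Lower bound.} The key observation is that, for $N_1=n_1+1$ and $N_2=n_2+1$, the matrix $\bb{Q}_{n_1,n_2}(\bb{\theta})$ equals (using the same row/column permutation as in (\ref{matriceprecisione})) a principal submatrix of $\widetilde{\bb{Q}}_{N_1,N_2}(\bb{\theta})$: restricting each of the four blocks of (\ref{bcapproximation}) to the index set ``the first $n_1$ coordinates of each of the first $n_2$ sub-blocks'' deletes exactly the rows and columns carrying the wrap-around corners of $\Circ(\cdot)$ and the wrap-around sub-blocks of $\bb{C}_{N_1,N_2}(\cdot)$, so that $[\bb{C}_{N_1,N_2}(x,y,z)]_{\text{restr.}}=\bb{T}_{n_1,n_2}(x,y,z)$ for each triple occurring in (\ref{matriceprecisione}). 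Cauchy's interlacing theorem then gives $\lambda_{2n}(\bb{Q}_{n_1,n_2}(\bb{\theta}))\ge\lambda_{\min}(\widetilde{\bb{Q}}_{n_1+1,n_2+1}(\bb{\theta}))$. Finally I would use the closed-form eigenvalues (\ref{autovaloribc}): $\lambda_{\min}(\widetilde{\bb{Q}}_{m_1,m_2}(\bb{\theta}))$ is the minimum, over the grid $\{(2\pi i/m_2,2\pi j/m_1):0\le i<m_2,\ 0\le j<m_1\}$, of a fixed continuous function $g_{\bb{\theta}}$ on the torus $[0,2\pi)^2$ (the lower branch of (\ref{autovaloribc}), continuous because the radicand is non-negative). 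As $m_1,m_2\to+\infty$ this grid becomes dense, so by uniform continuity $\lambda_{\min}(\widetilde{\bb{Q}}_{m_1,m_2}(\bb{\theta}))\to m(\bb{\theta}):=\min_{[0,2\pi)^2}g_{\bb{\theta}}$; applying this to $(m_1,m_2)=(n_1,n_2)$ and to $(m_1,m_2)=(n_1+1,n_2+1)$ yields $\lambda_{\min}(\widetilde{\bb{Q}}_{n_1+1,n_2+1}(\bb{\theta}))=\lambda_{2n}(\widetilde{\bb{Q}}_{n_1,n_2}(\bb{\theta}))+o(1)$.

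\textbf{Conclusion and main obstacle.} Chaining the three displays, for every $\bb{\theta}\in\R^5$ one gets $\lambda_{2n}(\widetilde{\bb{Q}}_{n_1,n_2}(\bb{\theta}))+o(1)\le\lambda_{2n}(\bb{Q}_{n_1,n_2}(\bb{\theta}))\le\lambda_{2n}(\widetilde{\bb{Q}}_{n_1,n_2}(\bb{\theta}))+o(1)$, which is the assertion. The hard part is the lower bound: Weyl's inequalities are too loose (as noted in the text) and Szeg\"{o}/Gray-type results only give convergence of the empirical spectral distributions, not of the extreme eigenvalue, so the interlacing detour is essential. Carrying it out rigorously requires checking that the enlarged-torus matrix genuinely contains $\bb{Q}_{n_1,n_2}(\bb{\theta})$ as a principal submatrix despite the asymmetries $\rho_{11}\neq\rho_{22}$ and $\rho_{12}\neq\rho_{21}$ (which are exactly what keeps $\bb{Q}_{n_1,n_2}(\bb{\theta})$ from being block-Toeplitz or block-circulant), and that the two torus grids of sizes $n_1n_2$ and $(n_1+1)(n_2+1)$ share the same limiting infimum $m(\bb{\theta})$ — which needs both $n_1\to+\infty$ and $n_2\to+\infty$ and the continuity of $g_{\bb{\theta}}$.
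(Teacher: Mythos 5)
Your proof is correct and follows essentially the same route as the paper: the upper half is exactly Lemma \ref{lemmafacile} (the spread-out eigenvectors from the bound (\ref{boundDiagonalizzazionePerturbata}) combined with the $\Theta(n_1+n_2)$ sparsity of $\Delta\bb{Q}_{n_1,n_2}(\bb{\theta})$, fed into the variational characterization (\ref{ottimizzazione1})), and the lower half is the paper's embedding-plus-Cauchy-interlacing argument, with the limit of the circulant minima identified on the torus exactly as in Lemma \ref{lemma ulteriore}. The only real difference is the size of the embedding: you realise $\bb{Q}_{n_1,n_2}(\bb{\theta})$ as a principal submatrix of $\widetilde{\bb{Q}}_{n_1+1,n_2+1}(\bb{\theta})$, whereas the paper embeds it into the doubled-lattice matrix $\bb{P}_{n_1,n_2}(\bb{\theta})\in\R^{8n\times 8n}$; both embeddings are legitimate despite $\rho_{12}\neq\rho_{21}$ and $\rho_{11}\neq\rho_{22}$, because the deleted rows and columns are chosen blockwise with one common index set.
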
 
The previous result implies that, asymptotically, $\bb{Q}_{n_1,n_2}(\bb{\theta})$ is positive-definite if and only if this property holds for $\widetilde{\bb{Q}}_{n_1,n_2}(\bb{\theta})$.

First, we provide evidence of its novelty with respect to what is discussed in the literature in the following.
\begin{remark}
\label{remark altre convergenze}
\begin{enumerate}
\item In the literature, several convergence results are discussed for sequences of (Hermitian) matrices of the form $\{\bb{A}_n\}_{n\in\N}$ and $\{\bb{B}_n\}_{n\in\N}$, where $\bb{A}_n,\bb{B}_n\in\C^{n\times n}$. In general, they do not imply Theorem \ref{Reinhardconjecture}. For instance, in \cite{CIT-006}, the weak convergence is discussed, which is widely used in many other references, for instance in \cite{GMRFbook}. It is defined in terms of the scaled Frobenius norm of the difference $\bb{A}_n-\bb{B}_n$, which approaches zero as $n\to +\infty$, and the spectral norms of the two sequences, which have to be bounded by a constant not depending on $n$. However, this type of convergence only implies that, for large grid sizes, the spectra behave similarly as a whole. In other words, nothing can generally be stated for the single eigenvalues. For instance, we set $\bb{A}_n=\bb{I}_n$ and ${\bb{B}_n=\operatorname{diag}(1/2,1,1,\dots,1)\in\R^{n\times n}}$. It can be shown that the weak convergence holds, but clearly there is no convergence for the two sequences $\lbrace\lambda_n(\bb{A}_n)\rbrace_{n\in\N}$ and $\lbrace\lambda_n(\bb{B}_n)\rbrace_{n\in\N}$.
\item Other types of convergence were introduced in the literature, like the finite-term strong convergence discussed in \cite{convergenzafortefinita}. The goal was to overcome some of the limitations of the weak convergence in the framework of evaluating quadratic forms and log-determinants involving variance-covariance matrices with a Toeplitz structure. However, nothing can be said for the convergence of individual eigenvalues, even in this stronger framework.
\item Another classical result that, under certain technical conditions, relates the spectra of (non-circulant) Toeplitz matrices to the spectra of circulant matrices is the first Szeg\"{o} theorem, discussed in \cite{opac-b1092435}. This result was later extended in \cite{Tilli199859} to block-Toeplitz matrices. As already stated, the structure of the precision (\ref{matriceprecisione}) is more general and does not fulfill the hypotheses of the mentioned extension. 
\end{enumerate}
\end{remark}
In order to prove Theorem \ref{Reinhardconjecture}, we preliminarily recall that
\begin{eqnarray}
\label{ottimizzazione1}
\lambda_{2n}(\bb{Q}_{n_1,n_2}(\bb{\theta})) & = & \min_{\bb{x}\in\C^{2n}\mid\Vert\bb{x}\Vert = 1}\langle \bb{Q}_{n_1,n_2}(\bb{\theta})\,\bb{x},\bb{x}\rangle,\\
\label{ottimizzazione2}
\lambda_{2n}(\widetilde{\bb{Q}}_{n_1,n_2}(\bb{\theta})) & = & \min_{\bb{x}\in\C^{2n}\mid\Vert\bb{x}\Vert = 1}\langle \widetilde{\bb{Q}}_{n_1,n_2}(\bb{\theta})\,\bb{x},\bb{x}\rangle,
\end{eqnarray}
where $\langle\cdot,\cdot\rangle$ is the Euclidean inner product on $\C^{2n}$ and $\Vert\cdot\Vert$ is the correspondent induced norm. These results can be straightforwardly proved by diagonalizing the matrices ${\bb{Q}}_{n_1,n_2}(\bb{\theta})$ and $\widetilde{\bb{Q}}_{n_1,n_2}(\bb{\theta})$, respectively. In what follows, we will denote with $\bb{v}_{n_1,n_2}(\bb{\theta})$ and $\bb{u}_{n_1,n_2}(\bb{\theta})$ a solution of (\ref{ottimizzazione1}) and (\ref{ottimizzazione2}), respectively. They are eigenvectors associated with the eigenvalues $\lambda_{2n}({\bb{Q}}_{n_1,n_2}(\bb{\theta}))$ and $\lambda_{2n}(\widetilde{\bb{Q}}_{n_1,n_2}(\bb{\theta}))$, respectively. 
Based on equation (\ref{autovaloribc}) and (\ref{boundDiagonalizzazionePerturbata}), the following lemmas can be shown.
\begin{lemma}
\label{lemma ulteriore}
For any $\bb{\theta}\in\R^5$, there is a constant $C(\bb{\theta})$ such that, for any $n_1,n_2\in\N$, ${C(\bb{\theta})\leq\lambda_{2n}(\widetilde{\bb{Q}}_{n_1,n_2}(\bb{\theta}))}$ and
$$
\lim_{n_1,n_2\to +\infty} \lambda_{2n}(\widetilde{\bb{Q}}_{n_1,n_2}(\bb{\theta})) = C(\bb{\theta}).
$$
\end{lemma}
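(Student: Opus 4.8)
The plan is to exploit the explicit closed-form expressions (\ref{autovaloribc}) for the eigenvalues of $\widetilde{\bb{Q}}_{n_1,n_2}(\bb{\theta})$. Each eigenvalue is obtained by evaluating, at a grid point $(i/n_2, j/n_1)$ with $i\in I$, $j\in J$, the function
$$
g_\pm(s,t) = \tfrac12\Big(a(s,t)+b(s,t)\pm\sqrt{(a(s,t)-b(s,t))^2+4|c(s,t)|^2}\Big),
$$
where $a,b$ are the (real, trigonometric) symbols of the two diagonal circulant blocks $\bb{C}_{n_1,n_2}(\rho_{11},1,\rho_{11})$, $\bb{C}_{n_1,n_2}(\rho_{22},1,\rho_{22})$ — cf.\ (\ref{autovaloriblocco11}) — and $c$ is the symbol of the off-diagonal block $\bb{C}_{n_1,n_2}(\rho_{21},\phi,\rho_{12})$ from (\ref{autovaloriblocco12}), with the arguments $s = i/n_2$, $t = j/n_1$ lying in $[0,1)$. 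These symbols are fixed trigonometric polynomials in $(s,t)$ depending only on $\bb{\theta}$, \emph{not} on $n_1,n_2$, and $g_-$ is continuous on the compact torus $[0,1]^2$ (the square root has a continuous, though not smooth, integrand). Hence $\lambda_{2n}(\widetilde{\bb{Q}}_{n_1,n_2}(\bb{\theta})) = \min_{i\in I,\,j\in J} g_-(i/n_2, j/n_1)$ is the minimum of $g_-$ over an increasingly fine lattice on the torus.

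The proof then splits into two steps. First, define $C(\bb{\theta}) := \min_{(s,t)\in[0,1]^2} g_-(s,t)$, which exists and is finite by continuity of $g_-$ on a compact set, and is independent of $n_1,n_2$. Since every grid value is an evaluation of $g_-$, we immediately get $C(\bb{\theta}) \le \lambda_{2n}(\widetilde{\bb{Q}}_{n_1,n_2}(\bb{\theta}))$ for all $n_1,n_2$, which is the uniform lower bound claimed. Second, for the convergence, fix $\varepsilon>0$; by uniform continuity of $g_-$ on $[0,1]^2$ there is $\delta>0$ such that $\vert g_-(s,t)-g_-(s',t')\vert<\varepsilon$ whenever $\Vert(s,t)-(s',t')\Vert<\delta$. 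Let $(s^*,t^*)$ be a minimizer of $g_-$. Once $n_1,n_2$ are large enough that the lattice spacing $\max(1/n_1,1/n_2)<\delta/\sqrt2$, there is a grid point within distance $\delta$ of $(s^*,t^*)$, so the minimum over the grid is at most $C(\bb{\theta})+\varepsilon$. Combined with the lower bound, $\vert\lambda_{2n}(\widetilde{\bb{Q}}_{n_1,n_2}(\bb{\theta}))-C(\bb{\theta})\vert<\varepsilon$ for all large $n_1,n_2$, giving the limit.

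The only point requiring a little care — and the place where I expect the main (mild) obstacle — is verifying that $g_-$ is genuinely continuous at the points where the discriminant $(a-b)^2+4|c|^2$ vanishes: there the square root is non-differentiable, but it remains continuous, and $g_-$ is a continuous function of a continuous function, so no problem arises; one just has to resist the temptation to differentiate. A secondary bookkeeping issue is matching the discrete index set $\{0,\dots,n_k-1\}$ used in (\ref{autovaloriblocco12})–(\ref{autovaloriblocco11}) with the closed torus $[0,1]^2$: the grid points $i/n_2$ never reach $1$, but since $g_-$ is $1$-periodic in each variable this is harmless, and the infimum over $[0,1)^2$ equals the minimum over the compact torus. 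Everything else is a routine compactness-plus-uniform-continuity argument; no spectral perturbation theory is needed for this lemma, as it concerns only the circulant approximation whose spectrum is known in closed form.
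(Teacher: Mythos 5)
Your proposal is correct and follows essentially the same route as the paper: the paper likewise identifies the eigenvalues (\ref{autovaloribc}) as values of a fixed continuous function on the compact set $\mathbb{S}\times\mathbb{S}$ evaluated at the lattice of roots of unity, obtains $C(\bb{\theta})$ from the Weierstrass extreme value theorem, and deduces the limit by picking grid points converging to a minimizer and sandwiching $\lambda_{2n}(\widetilde{\bb{Q}}_{n_1,n_2}(\bb{\theta}))$ between $C(\bb{\theta})$ and the value of the function at those grid points. Your parametrization by $[0,1]^2$ with periodicity and the explicit uniform-continuity step are only cosmetic differences from the paper's argument.
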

\begin{proof}
From Euler’s formula applied to equation (\ref{autovaloriblocco12}), it can be derived that
\begin{align}
\begin{split}
\label{bound norma}
\Re{\left(\widetilde{\lambda}_{i,j}^{(n)}(\bb{C}(\rho_{21}, \phi, \rho_{12}))\right)}  = &\ \phi + \rho_{12}\cos{\left(\frac{2\pi i}{n_2}\right)} + \rho_{21}\cos{\left(\frac{2\pi i}{n_2}\right)} \\
&   + \rho_{12}\cos{\left(\frac{2\pi j}{n_1}\right)} + \rho_{21}\cos{\left(\frac{2\pi j}{n_1}\right)},\\
\Im{\left(\widetilde{\lambda}_{i,j}^{(n)}(\bb{C}(\rho_{21}, \phi, \rho_{12}))\right)}  = & - \rho_{12}\sin{\left(\frac{2\pi i}{n_2}\right)} + \rho_{21}\sin{\left(\frac{2\pi i}{n_2}\right)} \\
&   - \rho_{12}\sin{\left(\frac{2\pi j}{n_1}\right)} + \rho_{21}\sin{\left(\frac{2\pi j}{n_1}\right)},
\end{split}
\end{align}
where $\Re(z)$ and $\Im(z)$ denote, respectively, the real and imaginary part of a complex number $z$. Let $\mathbb{S}$ be the unitary circle in $\C$. From the previous equalities and equation (\ref{autovaloriblocco11}), it follows that the eigenvalues (\ref{autovaloribc}) correspond to the image of the finite subset:
\begin{equation*}
\left\{\left(\sin{\left(\frac{2\pi j}{n_1}\right)},\, \cos{\left(\frac{2\pi j}{n_1}\right)}\right)\right\}_{j\in J}
\times
\left\{\left(\sin{\left(\frac{2\pi i}{n_2}\right)},\, \cos{\left(\frac{2\pi i}{n_2}\right)}\right)\right\}_{i\in I}
\end{equation*}
of $\mathbb{S}\times\mathbb{S}$ through a continuous function defined by $\mathbb{S}\times\mathbb{S}$ to $\R$ by equation (\ref{autovaloribc}) itself. We shall denote this function with $\Phi(\cdot)$. Since $\mathbb{S}\times\mathbb{S}$ is a compact subset of $\C^2$, the existence of the constant ${C}(\bb{\theta})$ follows from the Weierstrass extreme value theorem.

Let $(\bb{x}^\ast,\bb{y}^\ast)\in\mathbb{S}\times\mathbb{S}$ such that $\Phi((\bb{x}^\ast,\bb{y}^\ast)) = C(\bb{\theta})$. From equations (\ref{autovaloriblocco11}) and (\ref{bound norma}), there is a sequence $\lbrace(\bb{x}_{n_1},\bb{y}_{n_2})\rbrace_{n_1,n_2\in\N}\subset\mathbb{S}\times\mathbb{S}$ of the form
\begin{align*}
\bb{x}_{n_1}  = &\left\{\left(\sin{\left(\frac{2\pi j_{n_1}}{n_1}\right)},\, \cos{\left(\frac{2\pi j_{n_1}}{n_1}\right)}\right)\right\}_{n_1\in\N},\\
\bb{y}_{n_2}  = &\left\{\left(\sin{\left(\frac{2\pi i_{n_2}}{n_2}\right)},\, \cos{\left(\frac{2\pi i_{n_2}}{n_2}\right)}\right)\right\}_{n_2\in\N},
\end{align*}
where $i_{n_2}\in I$ and $j_{n_1}\in J$, such that $(\bb{x}_{n_1},\bb{y}_{n_2})\to(\bb{x}^\ast,\bb{y}^\ast)$.
Then, for continuity, $${\vert \Phi((\bb{x}_{n_1},\bb{y}_{n_2}))- C(\bb{\theta})\vert\to 0}.$$ However, $C(\bb{\theta})\leq\lambda_{2n}(\widetilde{\bb{Q}}_{n_1,n_2}(\bb{\theta})) \leq \Phi((\bb{x}_{n_1},\bb{y}_{n_2}))$, hence $${\vert C(\bb{\theta}) - \lambda_{2n}(\widetilde{\bb{Q}}_{n_1,n_2}(\bb{\theta}))\vert \leq\vert \Phi((\bb{x}_{n_1},\bb{y}_{n_2})) - C(\bb{\theta})\vert},$$ from which the second claim follows.
\end{proof}
\begin{lemma}
\label{lemmafacile}
For any $\bb{\theta}\in\R^5$, it holds that
\begin{equation*}
\lim_{n_1,n_2\to +\infty} \big\vert{\langle \bb{Q}_{n_1,n_2}(\bb{\theta})\,\bb{u}_{n_1,n_2}(\bb{\theta}),\bb{u}_{n_1,n_2}(\bb{\theta})\rangle -\lambda_{2n}(\widetilde{\bb{Q}}_{n_1,n_2}(\bb{\theta}))}\big\vert = 0.
\end{equation*}
\end{lemma}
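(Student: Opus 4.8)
The plan is to exploit the fact that the perturbation $\Delta\bb{Q}_{n_1,n_2}(\bb{\theta}) = \widetilde{\bb{Q}}_{n_1,n_2}(\bb{\theta}) - \bb{Q}_{n_1,n_2}(\bb{\theta})$ has only $\Theta(n_1+n_2)$ non-zero entries, while the minimizing eigenvector $\bb{u}_{n_1,n_2}(\bb{\theta})$ of $\widetilde{\bb{Q}}_{n_1,n_2}(\bb{\theta})$ is known, via equation~(\ref{boundDiagonalizzazionePerturbata}), to be ``spread out'': each of its $2n$ components has absolute value in $\Theta(1/\sqrt{n_1n_2})$. First I would write
$$
\langle \bb{Q}_{n_1,n_2}(\bb{\theta})\,\bb{u}_{n_1,n_2}(\bb{\theta}),\bb{u}_{n_1,n_2}(\bb{\theta})\rangle
= \langle \widetilde{\bb{Q}}_{n_1,n_2}(\bb{\theta})\,\bb{u}_{n_1,n_2}(\bb{\theta}),\bb{u}_{n_1,n_2}(\bb{\theta})\rangle
- \langle \Delta\bb{Q}_{n_1,n_2}(\bb{\theta})\,\bb{u}_{n_1,n_2}(\bb{\theta}),\bb{u}_{n_1,n_2}(\bb{\theta})\rangle,
$$
and note that, since $\bb{u}_{n_1,n_2}(\bb{\theta})$ is a (unit-norm) eigenvector realizing the minimum in~(\ref{ottimizzazione2}), the first term on the right-hand side equals exactly $\lambda_{2n}(\widetilde{\bb{Q}}_{n_1,n_2}(\bb{\theta}))$. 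Hence the quantity whose limit we must control is simply $\big\vert\langle \Delta\bb{Q}_{n_1,n_2}(\bb{\theta})\,\bb{u}_{n_1,n_2}(\bb{\theta}),\bb{u}_{n_1,n_2}(\bb{\theta})\rangle\big\vert$, and it suffices to show this tends to zero.

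The core estimate is then a direct entrywise bound on the quadratic form $\langle \Delta\bb{Q}_{n_1,n_2}(\bb{\theta})\,\bb{u},\bb{u}\rangle = \sum_{k,l} (\Delta\bb{Q}_{n_1,n_2}(\bb{\theta}))_k^l\, \overline{u_k}\, u_l$. Writing $\bb{u} = \bb{u}_{n_1,n_2}(\bb{\theta})$ and using $\vert u_k\vert\vert u_l\vert \le \sqrt{2}/\sqrt{n_1n_2}\cdot\sqrt{2}/\sqrt{n_1n_2} = 2/(n_1n_2)$ from~(\ref{boundDiagonalizzazionePerturbata}), together with the triangle inequality, I get
$$
\big\vert\langle \Delta\bb{Q}_{n_1,n_2}(\bb{\theta})\,\bb{u},\bb{u}\rangle\big\vert
\le \frac{2}{n_1 n_2}\sum_{k,l}\big\vert(\Delta\bb{Q}_{n_1,n_2}(\bb{\theta}))_k^l\big\vert
\le \frac{2}{n_1 n_2}\cdot\big(\text{number of non-zero entries}\big)\cdot M(\bb{\theta}),
$$
where $M(\bb{\theta}) = \max_{k,l}\big\vert(\Delta\bb{Q}_{n_1,n_2}(\bb{\theta}))_k^l\big\vert$ is bounded uniformly in $n_1,n_2$ (the non-zero entries of $\Delta\bb{Q}_{n_1,n_2}(\bb{\theta})$ are among $\pm\rho_{11},\pm\rho_{12},\pm\rho_{21},\pm\rho_{22}$, which do not depend on $n_1,n_2$). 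Since the number of non-zero entries of $\Delta\bb{Q}_{n_1,n_2}(\bb{\theta})$ is at most $8(n_1+n_2)$, the right-hand side is at most $16 M(\bb{\theta})\,(n_1+n_2)/(n_1n_2) = 16 M(\bb{\theta})\,(1/n_2 + 1/n_1)$, which converges to $0$ as $n_1,n_2\to+\infty$. This is exactly the place where the observation, emphasized in Section~\ref{The Model and Preliminary Results}, that the ratio between the number of non-zero entries of $\bb{Q}_{n_1,n_2}(\bb{\theta})$ and of $\Delta\bb{Q}_{n_1,n_2}(\bb{\theta})$ diverges, gets used.

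I expect the only genuinely delicate point to be the uniform boundedness of the eigenvector components, i.e.\ making sure~(\ref{boundDiagonalizzazionePerturbata}) applies to the specific minimizing eigenvector $\bb{u}_{n_1,n_2}(\bb{\theta})$: the bound in~(\ref{boundDiagonalizzazionePerturbata}) is stated for the columns of $\bb{U}_{n_1,n_2}\bb{V}_{n_1,n_2}(\bb{\theta})$, which form a full eigenbasis of $\widetilde{\bb{Q}}_{n_1,n_2}(\bb{\theta})$, so $\bb{u}_{n_1,n_2}(\bb{\theta})$ can be taken to be one such column (choosing, within each eigenspace, a basis eigenvector), and then the $\Theta(1/\sqrt{n_1 n_2})$ entrywise bound is inherited. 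One should also remark that the argument does not require $\Delta\bb{Q}_{n_1,n_2}(\bb{\theta})$ to be definite or to have controlled spectral norm — which is why the crude Weyl bounds mentioned earlier are too weak, whereas the entrywise/sparsity argument succeeds. Everything else is a routine estimate.
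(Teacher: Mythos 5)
Your proposal is correct and follows essentially the same route as the paper: you rewrite the quantity as $\big\vert\langle\Delta\bb{Q}_{n_1,n_2}(\bb{\theta})\,\bb{u}_{n_1,n_2}(\bb{\theta}),\bb{u}_{n_1,n_2}(\bb{\theta})\rangle\big\vert$, bound it entrywise using (\ref{boundDiagonalizzazionePerturbata}) and the $8(n_1+n_2)$ sparsity count of $\Delta\bb{Q}_{n_1,n_2}(\bb{\theta})$, and conclude via a constant times $(n_1+n_2)/(n_1n_2)\to 0$, exactly as in the paper's argument. Your explicit remark that $\bb{u}_{n_1,n_2}(\bb{\theta})$ may be taken as a column of $\bb{U}_{n_1,n_2}\bb{V}_{n_1,n_2}(\bb{\theta})$ so that the entrywise bound applies is a point the paper leaves implicit, and is a welcome clarification rather than a deviation.
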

\begin{proof}
The argument of the limit is equal to $\big\vert\langle\Delta\bb{Q}_{n_1,n_2}(\bb{\theta})\,\bb{u}_{n_1,n_2}(\bb{\theta}),\bb{u}_{n_1,n_2}(\bb{\theta})\rangle\big\vert$. Hence
\begin{eqnarray*}
\nonumber
\big\vert\langle\Delta\bb{Q}_{n_1,n_2}(\bb{\theta})\,\bb{u}_{n_1,n_2}(\bb{\theta}),\bb{u}_{n_1,n_2}(\bb{\theta})\rangle\big\vert &= &
\left\vert\sum_{l = 1}^{2n}\sum_{m = 1}^{2n}(\Delta\bb{Q}_{n_1,n_2}(\bb{\theta}))_l^m\,(\bb{u}_{n_1,n_2}(\bb{\theta}))_l\,(\bb{u}_{n_1,n_2}(\bb{\theta}))_m^*\right\vert\\
\nonumber
& \leq & \sum_{l = 1}^{2n}\sum_{m = 1}^{2n}\left\vert (\Delta\bb{Q}_{n_1,n_2}(\bb{\theta}))_l^m\right\vert\,\frac{1}{n_1n_2}\\
& \leq & 8\,K(\bb{\theta})\,\frac{n_1+n_2}{n_1 n_2},
\end{eqnarray*}
where $K(\bb{\theta})$ is a constant depending on $\bb{\theta}$ but not on $n$. This constant exists because the entries of $\Delta\bb{Q}_{n_1,n_2}(\bb{\theta})$ do not depend on the grid size $n$. In addition, we used the bound (\ref{boundDiagonalizzazionePerturbata}) and the fact that the latter matrix has at most $8(n_1+n_2)$ non-zero entries.
\end{proof}
In addition to the previous results, in order to prove Theorem \ref{Reinhardconjecture}, we will construct a $2\times 2$ block matrix $\bb{P}_{n_1,n_2}(\bb{\theta})\in\R^{8n\times 8n}$ with the structure that we now describe. In \cite{tesidottoratoembedding}, an approach to embed a symmetric block-Toeplitz matrix of size $
n_1n_2\times n_1n_2$ into a symmetric block-circulant matrix of size $4n_1n_2\times 4n_1n_2$ is proposed. Here, we extend this technique in a natural way to the structure of the precision matrix (\ref{matriceprecisione}). Recall that the latter is partitioned into four blocks, which are block-Toeplitz (see equation (\ref{funzioneT}) for their structure). We will start by embedding the $n_1\times n_1$ Toeplitz sub-blocks in $2n_1\times 2n_1$ circulant matrices. The sub-blocks of $\bb{T}_{n_1,n_2}(\cdot)$, which are diagonal, are embedded in a diagonal matrix of size $2n_1\times 2n_1$. In addition, we embed the $n_1\times n_1$ tridiagonal blocks in $\Circ(x,y,z)\in\R^{2n_1\times 2n_1}$. For the sake of compactness of the notation, we will denote with $\diag_{2}(x)$ the former matrix and with $\Circ_{2}(x,y,z)$ the latter. In what follows, it will be important to explicitly observe that
\begin{align}
\begin{split}
\label{embedding blocchetti esplicito}
\diag_{2}(x)  = &
\begin{pmatrix}
\diag(x) & \bb{0}\\
\bb{0} & \diag(x)
\end{pmatrix},\\
\Circ_{2}(x,y,z)  = &
\begin{pmatrix}
\tridiag(x,y,z) & \ast\\
\ast & \tridiag(x,y,z)
\end{pmatrix}.
\end{split}
\end{align}
We define 
\begin{equation*}
\bb{P}_{n_1,n_2}(\bb{\theta}) = 
\begin{pmatrix}
\bb{E}_{n_1,n_2}(\rho_{11}, 1, \rho_{11}) & \bb{E}_{n_1,n_2}(\rho_{21}, \phi, \rho_{12})\\
\bb{E}_{n_1,n_2}(\rho_{21}, \phi, \rho_{12})\T & \bb{E}_{n_1,n_2}(\rho_{22}, 1, \rho_{22})
\end{pmatrix}\in\R^{8n\times 8n},
\end{equation*}
where $\bb{E}_{n_1,n_2}(x,y,z)\in\R^{4n\times 4n}$ is given by
\begin{equation*}
\begin{pmatrix}
\Circ_{2}(x,y,z) & \diag_{2}(z) &  &  & \diag_{2}(x)\\
\diag_{2}(x) & \Circ_{2}(x,y,z) & \diag_{2}(z) &  &  \\
 & \diag_{2}(x) & \ddots & \ddots &  \\
 &  & \ddots & \ddots & \diag_2(z)\\
\diag_{2}(z) &  &  & \diag_2(x) & \Circ_{2}(x,y,z)
\end{pmatrix}.
\end{equation*}
The construction of the matrix $\bb{E}_{n_1,n_2}(x,y,z)$  and the next lemma are better understood by means of the following
\begin{example}
Set $n_2 = 3$: then
\begin{equation*}
\bb{T}_{n_1, 3}(x,y,z) = 
\begin{pmatrix}
\tridiag(x,y,z) & \diag(z) & \bb{0} \\
\diag(x) & \tridiag(x,y,z) & \diag(z)\\
\bb{0} & \diag(x) & \tridiag(x,y,z) \\
\end{pmatrix}.
\end{equation*}
The matrix $\bb{E}_{n_1,3}(x,y,z)$ is obtained by circulation from
\begin{small}
\begin{equation*}
\left(
\begin{array}{c c | c c | c c | c c | c c | c c}
\tridiag(x,y,z) & \ast & \diag(z) & \bb{0} & \bb{0} & \bb{0} & \bb{0} & \bb{0} & \bb{0} & \bb{0} & \diag(x) & \bb{0}\\
\ast & \tridiag(x,y,z) & \bb{0} & \diag(z) & \bb{0} & \bb{0} & \bb{0} & \bb{0} & \bb{0} & \bb{0} & \bb{0} & \diag(x)\\
\hline
\vdots & \vdots &\vdots &\vdots &\vdots &\vdots &\vdots &\vdots &\vdots &\vdots &\vdots &\vdots 
\end{array}
\right).
\end{equation*}
\end{small}
In general, in the expression above, there are $2n_2-3$ blocks of the form $\bb{0}_{2n_1\times 2n_1}$. 
\end{example}
We prove the main properties of the matrix $\bb{P}_{n_1,n_2}(\bb{\theta})$ in the following
\begin{lemma}
\label{lemmaP}
Let $\bb{\theta}\in\R^5$. Then,
\begin{enumerate}
\item $\lambda_{8n}(\bb{P}_{n_1,n_2}(\bb{\theta})) \leq \lambda_{2n}({\bb{Q}}_{n_1,n_2}(\bb{\theta}))$
\item $\lim_{n_1,n_2\to +\infty} \vert \lambda_{2n}(\bb{\widetilde{Q}}_{n_1,n_2}(\bb{\theta}))-\lambda_{8n}(\bb{P}_{n_1,n_2}(\bb{\theta}))\vert = 0$
\end{enumerate}
\end{lemma}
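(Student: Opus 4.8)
\emph{Item 1.} The plan is to realize $\bb{Q}_{n_1,n_2}(\bb{\theta})$ as a principal submatrix of $\bb{P}_{n_1,n_2}(\bb{\theta})$ and then conclude by a Rayleigh-quotient (equivalently, Cauchy-interlacing) argument. The key observation is that, for \emph{every} triple $(x,y,z)$, the block-Toeplitz matrix $\bb{T}_{n_1,n_2}(x,y,z)$ of \eqref{funzioneT} is exactly the principal submatrix of $\bb{E}_{n_1,n_2}(x,y,z)$ obtained by keeping, inside each $2n_1\times 2n_1$ sub-block, the first $n_1$ coordinates and, at the block level, the first $n_2$ block-rows and block-columns: the ``wrap-around'' blocks $\diag_2(x),\diag_2(z)$ and the corner entries of $\Circ_2(x,y,z)$ which distinguish a circulant from a Toeplitz matrix all fall outside this corner because $2n_1>n_1$ and $2n_2>n_2$. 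This is precisely what \eqref{embedding blocchetti esplicito} and the Example preceding the Lemma make explicit. The associated index set $S\subset\{1,\dots,4n\}$ does not depend on $(x,y,z)$, and restriction to $S\times S$ commutes with transposition; applying it block by block to the $2\times 2$ block forms \eqref{matriceprecisione} and $\bb{P}_{n_1,n_2}(\bb{\theta})$ shows that $\bb{Q}_{n_1,n_2}(\bb{\theta})$ is the principal submatrix of $\bb{P}_{n_1,n_2}(\bb{\theta})$ indexed by $\widetilde{S}=S\cup(4n+S)$. Padding a unit vector $\bb{x}\in\C^{2n}$ with zeros outside $\widetilde{S}$ then gives, via \eqref{ottimizzazione1} applied to $\bb{P}_{n_1,n_2}(\bb{\theta})$, that $\lambda_{8n}(\bb{P}_{n_1,n_2}(\bb{\theta}))\leq\langle\bb{Q}_{n_1,n_2}(\bb{\theta})\bb{x},\bb{x}\rangle$ for all such $\bb{x}$, and minimizing over $\bb{x}$ yields item 1.

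\emph{Item 2.} Here I would reproduce, for $\bb{P}_{n_1,n_2}(\bb{\theta})$, the spectral analysis that Section \ref{The Model and Preliminary Results} carried out for $\widetilde{\bb{Q}}_{n_1,n_2}(\bb{\theta})$. By construction the entries of the block $\bb{E}_{n_1,n_2}(x,y,z)$ follow the same $(x,y,z)$ stencil in both directions as those of the block $\bb{C}_{n_1,n_2}(x,y,z)$ of \eqref{bcapproximation}, the only change being that the underlying grid now carries $2n_1\times 2n_2$ nodes instead of $n_1\times n_2$; thus its $(i,j)$-th eigenvalue is obtained from \eqref{autovaloriblocco12} upon replacing $n_1,n_2$ by $2n_1,2n_2$, and $\sigma(\bb{E}_{n_1,n_2}(x,y,z))$ is the image under the corresponding continuous function on $\Sp\times\Sp$ of the refined grid $G_{2n_1,2n_2}=\{(\sin\frac{2\pi j}{2n_1},\cos\frac{2\pi j}{2n_1})\}_{j}\times\{(\sin\frac{2\pi i}{2n_2},\cos\frac{2\pi i}{2n_2})\}_{i}\subset\Sp\times\Sp$. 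Since $\bb{P}_{n_1,n_2}(\bb{\theta})$ is the real symmetric $2\times 2$ block matrix built from the four simultaneously diagonalizable matrices $\bb{E}_{n_1,n_2}(\rho_{11},1,\rho_{11})$, $\bb{E}_{n_1,n_2}(\rho_{21},\phi,\rho_{12})$, $\bb{E}_{n_1,n_2}\T(\rho_{21},\phi,\rho_{12})$, $\bb{E}_{n_1,n_2}(\rho_{22},1,\rho_{22})$ (the $(2,1)$ symbol being the complex conjugate of the $(1,2)$ symbol because $x,y,z$ are real), the derivation of \eqref{diagonalizzazioneblocchetti2x2} and \eqref{autovaloribc} carries over verbatim, so that $\sigma(\bb{P}_{n_1,n_2}(\bb{\theta}))=\Phi(G_{2n_1,2n_2})$, where $\Phi$ is the very function of Lemma \ref{lemma ulteriore}.

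It then remains to compare minima. Because $(\sin\frac{2\pi k}{m},\cos\frac{2\pi k}{m})=(\sin\frac{2\pi (2k)}{2m},\cos\frac{2\pi (2k)}{2m})$, the grid that produces $\sigma(\widetilde{\bb{Q}}_{n_1,n_2}(\bb{\theta}))$ through \eqref{autovaloribc} is contained in $G_{2n_1,2n_2}$, which in turn is contained in $\Sp\times\Sp$. Minimizing $\Phi$ over this chain of sets gives
\[
C(\bb{\theta})=\min_{\Sp\times\Sp}\Phi\ \leq\ \lambda_{8n}(\bb{P}_{n_1,n_2}(\bb{\theta}))=\min_{G_{2n_1,2n_2}}\Phi\ \leq\ \lambda_{2n}(\widetilde{\bb{Q}}_{n_1,n_2}(\bb{\theta})).
\]
By Lemma \ref{lemma ulteriore} the right-hand side converges to $C(\bb{\theta})$, hence so does $\lambda_{8n}(\bb{P}_{n_1,n_2}(\bb{\theta}))$ by the squeeze theorem, and therefore $\vert\lambda_{2n}(\widetilde{\bb{Q}}_{n_1,n_2}(\bb{\theta}))-\lambda_{8n}(\bb{P}_{n_1,n_2}(\bb{\theta}))\vert\to 0$.

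\emph{Main obstacle.} The analytic content is light: everything reduces to \eqref{ottimizzazione1}, \eqref{autovaloribc} and Lemma \ref{lemma ulteriore}. The genuinely delicate part is the combinatorial bookkeeping of the two-level circulant embedding, namely checking that successively embedding the $n_1\times n_1$ Toeplitz sub-blocks into $2n_1\times 2n_1$ circulants and then the $n_2$ block-rows into a $2n_2$-periodic block-circulant leaves $\bb{Q}_{n_1,n_2}(\bb{\theta})$ as an honest principal submatrix of $\bb{P}_{n_1,n_2}(\bb{\theta})$ (needed for item 1) and introduces no eigenvalue below the torus minimum $C(\bb{\theta})$ (which is exactly the ``same stencil, finer grid'' statement used for item 2).
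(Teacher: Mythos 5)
Your proposal is correct and follows essentially the same route as the paper: for item 1 you identify $\bb{Q}_{n_1,n_2}(\bb{\theta})$ as the principal submatrix of $\bb{P}_{n_1,n_2}(\bb{\theta})$ given by exactly the index set the paper uses (your zero-padding Rayleigh-quotient step is just the Cauchy interlacing bound the paper cites), and for item 2 you observe, as the paper does, that $\bb{E}_{n_1,n_2}(x,y,z)=\bb{C}_{2n_1,2n_2}(x,y,z)$ so that $\lambda_{8n}(\bb{P}_{n_1,n_2}(\bb{\theta}))$ is the torus minimum eigenvalue on a $2n_1\times 2n_2$ lattice and Lemma \ref{lemma ulteriore} applies. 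Your grid-containment and squeeze argument is merely a more explicit rendering of the paper's remark that this sequence is a subsequence of the convergent sequence $\lbrace\lambda_{2n}(\widetilde{\bb{Q}}_{n_1,n_2}(\bb{\theta}))\rbrace$, both converging to $C(\bb{\theta})$.
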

\begin{proof}
\begin{enumerate}
\item Up to a permutation of the rows and columns of $\bb{P}_{n_1,n_2}(\bb{\theta})$, it holds that
\begin{equation}
\label{permutazionefurba}
\bb{P}_{n_1,n_2}(\bb{\theta}) =
\begin{pmatrix}
\bb{Q}_{n_1,n_2}(\bb{\theta}) & \ast\\
\ast & \ast
\end{pmatrix}.
\end{equation}
In order to show this claim, we introduce the set of indices given by
\begin{equation*}
\mathcal{I} = \{1,2,\dots,n_1,\,2n_1+1,\dots,3n_1,\,\dots\dots,\,2(n_2-1)n_1+1,\dots,(2n_2-1)n_1 \},
\end{equation*}
which has cardinality $n_1n_2$. Then $\bb{T}_{n_1,n_2}(\rho_{11}, 1, \rho_{11}) = (\bb{E}_{n_1,n_2}(\rho_{11}, 1, \rho_{11}))_i^i$, with $i\in\mathcal{I}$, as a consequence of (\ref{funzioneT}) and (\ref{embedding blocchetti esplicito}). This line of reasoning, when applied block-wise to $\bb{P}_{n_1,n_2}(\bb{\theta})$, implies (\ref{permutazionefurba}). The well-known Cauchy interlacing theorem, discussed in \cite{InterlacingInequalities}, yields the inequality $\lambda_{8n}(\bb{P}_{n_1,n_2}(\bb{\theta}))\leq \lambda_{2n}(\bb{Q}_{n_1,n_2}(\bb{\theta}))$.

\item The matrix $\bb{P}_{n_1,n_2}(\bb{\theta})$ is partitioned into four blocks $\bb{E}_{n_1,n_2}(\cdot)$, which are of the form $\bb{C}_{n_1,n_2}(\cdot)$, as defined in equation (\ref{bcapproximation}). Therefore, the eigenvalues of $\bb{P}_{n_1,n_2}(\bb{\theta})$ can be obtained from equation (\ref{autovaloriblocco12}) by considering a regular lattice of size $2n_1\times 2n_2$. Hence, $\lbrace\lambda_{8n}(\bb{P}_{n_1,n_2}(\bb{\theta}))\rbrace_{n\in\N}$ is a sub-sequence of $\lbrace\lambda_{2n}(\widetilde{\bb{Q}}_{n_1,n_2}(\bb{\theta}))\rbrace_{n\in\N}$. For Lemma \ref{lemma ulteriore}, the latter sequence is convergent, hence it is a Cauchy sequence.
\end{enumerate}
\end{proof}
Now, we have all the results that we need to show Theorem \ref{Reinhardconjecture}.
\begin{proof}[Proof of Theorem~\ref{Reinhardconjecture}]
It holds that
\begin{equation*}
\lambda_{2n}(\bb{Q}_{n_1,n_2}(\bb{\theta})) = \langle \bb{Q}_{n_1,n_2}(\bb{\theta})\,\bb{v}_{n_1,n_2}(\bb{\theta}),\bb{v}_{n_1,n_2}(\bb{\theta})\rangle
\leq
\langle \bb{Q}_{n_1,n_2}(\bb{\theta})\,\bb{u}_{n_1,n_2}(\bb{\theta}),\bb{u}_{n_1,n_2}(\bb{\theta})\rangle,
\end{equation*}
as $\bb{v}_{n_1,n_2}(\bb{\theta})$ is a solution of the variational problem (\ref{ottimizzazione1}). In addition:
\begin{equation}
\label{catenageniale}
\lambda_{2n}(\bb{\widetilde{Q}}_{n_1,n_2}(\bb{\theta}))\sim\lambda_{8n}(\bb{P}_{n_1,n_2}(\bb{\theta}))\leq
\lambda_{2n}(\bb{Q}_{n_1,n_2}(\bb{\theta}))\leq \langle\bb{Q}_{n_1,n_2}(\bb{\theta})\,\bb{u}_{n_1,n_2}(\bb{\theta}),\bb{u}_{n_1,n_2}(\bb{\theta})\rangle,
\end{equation}
where for $\{ a_n\}_{n\in\N}$ and $\{b_n\}_{n\in\N}$ we write, with slight abuse of notation, $a_n\sim b_n$ if and only if $\lim_{n\to +\infty}\vert a_n-b_n\vert = 0$. Now, $\lambda_{2n}(\bb{\widetilde{Q}}_{n_1,n_2}(\bb{\theta}))\sim\lambda_{8n}(\bb{P}_{n_1,n_2}(\bb{\theta}))$ is the second claim of Lemma~\ref{lemmaP}, while $\lambda_{8n}(\bb{P}_{n_1,n_2}(\bb{\theta}))\leq
\lambda_{2n}(\bb{Q}_{n_1,n_2}(\bb{\theta}))$ is the first claim thereof. The chain (\ref{catenageniale}) is therefore proved by applying Lemma \ref{lemmafacile}. It implies Theorem \ref{Reinhardconjecture}.
\end{proof}
\begin{remark}
\label{remarkfalsepositives}
The chain (\ref{catenageniale}) does not imply that $\lambda_{2n}(\bb{\widetilde{Q}}_{n_1,n_2}(\bb{\theta}))\leq \lambda_{2n}(\bb{{Q}}_{n_1,n_2}(\bb{\theta}))$. If it would have been possible to theoretically show this inequality, then we would not have needed to construct the matrix $\bb{P}_{n_1,n_2}(\bb{\theta})$. On the other hand, this property is beneficial in the applications. If the perturbed precision (\ref{bcapproximation}) is positive-definite, then the precision (\ref{matriceprecisione}) has the same property. This avoids the case of \lq\lq false positives,\rq\rq namely efficiently sampling from (\ref{insiemevaliditaperturbato}) values of $\bb{\theta}$ which do not belong to the domain (\ref{insiemevalidita}).
\end{remark}
From the previous proof, only a loose bound of the rate of convergence of \mbox{Theorem  \ref{Reinhardconjecture}} can be found. This issue stems from the fact we had to consider the auxiliary quantities $ \langle\bb{Q}_{n_1,n_2}(\bb{\theta})\,\bb{u}_{n_1,n_2}(\bb{\theta}),\bb{u}_{n_1,n_2}(\bb{\theta})\rangle$ and $\lambda_{8n}(\bb{P}_{n_1,n_2}(\bb{\theta}))$ in order to avoid the lack of analytical results on the spectrum of ${\bb{Q}}_{n_1,n_2}(\bb{\theta})$. As a consequence, we can only get loose bounds on $\vert \lambda_{2n}(\bb{{Q}}_{n_1,n_2}(\bb{\theta})) - \lambda_{2n}(\bb{\widetilde{Q}}_{n_1,n_2}(\bb{\theta}))\vert$ in terms of the just mentioned two quantities. 
In the next section, we will provide a thorough numerical study aimed at empirically obtaining a tight upper bound of the rate of convergence.
\section{Assessment of the Rate of Convergence}
\label{Assessment of the Rate of Convergence}
In this section, we numerically assess the rate of convergence in Theorem \ref{Reinhardconjecture}. As a preliminary step, we briefly inspect the asymptotic behavior in both the case of a univariate GMRF over a transect of size $n$, namely a grid constructed on a line, and over a bidimensional regular lattice of size $n_1\times n_2$. This will be helpful later to interpret the more general case of approximating the precision (\ref{matriceprecisione}) with the matrix (\ref{bcapproximation}). For a univariate GMRF, the eigenvalues of the associated precision matrix can be analytically determined. On the contrary, in the bivariate case, we will need to resort to numerical techniques, since the eigenvalues of the precision (\ref{matriceprecisione}) are not known, as already stated. We will keep the notation as close as possible to the one that we used when we introduced precision (\ref{matriceprecisione}). Since we will be concerned with matrices whose structure is similar to or the same as the one of the $n_1n_2\times n_1n_2$ diagonal blocks of (\ref{matriceprecisione}), we will therefore denote the off-diagonal non-zero elements with $\rho$. 

For a univariate field over a transect of length $n$, the Toeplitz precision $\tridiag(\rho,1,\rho)$ is naturally approximated by $\Circ(\rho,1,\rho)$. The eigenvalues of a Toeplitz tridiagonal matrix are well-known (see, e.g., \cite{NLA1811}). It holds that ${\lambda_n(\tridiag(\rho,1,\rho)) = 1 - 2\vert \rho\vert\,\cos{(\pi/(n+1))}}$. In addition, if $\rho < 0$, then ${\lambda_n(\Circ(\rho,1,\rho)) =  1 - 2\vert \rho\vert}$, whereas, if $\rho > 0$:
\begin{equation*}
\lambda_n(\Circ(\rho,1,\rho)) =
\begin{cases}
1 - 2\vert \rho\vert, & \text{if $n$ is even},\\
1 + 2\vert \rho\vert \,\cos{\left(\frac{2\pi}{n}\lfloor \frac{n}{2}\rfloor\right)}, & \text{if $n$ is odd}.
\end{cases}
\end{equation*}
We consider the absolute error $\epsilon_n(\rho) =\vert \lambda_n(\tridiag(\rho,1,\rho)) - \lambda_n(\Circ(\rho,1,\rho)\vert$. Then, if $\rho < 0$, $1-2\vert\rho\vert=\lambda_n(\Circ(\rho, 1, \rho))\leq\lambda_n(\tridiag(\rho, 1,\rho))$ and $\epsilon_n(\rho)\in\Theta(1/n^2)$. This can be proved by expanding the function $n\mapsto\cos{(\pi/(n+1))}$ in the Taylor series evaluated at $n = \infty$. Otherwise, if $\rho > 0$, the same order of convergence can be obtained if $n$ is even. If $n$ is odd, $\epsilon_n(\rho)\in\Theta(1/n^3)$, but $\lambda_n(\tridiag(\rho, 1,\rho))\leq\lambda_n(\Circ(\rho, 1, \rho))$. These different rates of convergence, which depend on the parity of $n$, will allow us to better interpret the convergence pattern in the more general case of a bivariate field over a regular lattice.

We now consider the case of a univariate GMRF over a bidimensional regular lattice of size $n = n_1\times n_2$, whose precision matrix is $\bb{T}_{n_1,n_2}(\rho, 1,\rho)$. The eigenvalues of this matrix are explicitly derived in \cite{paperunivariato}. Based on our approach, $\bb{T}_{n_1,n_2}(\rho, 1,\rho)$ is approximated with $\bb{C}_{n_1,n_2}(\rho, 1,\rho)$. Similarly to what was done above, we consider the absolute error $\epsilon_n(\rho) = \vert\lambda_{n}(\bb{T}_{n_1,n_2}(\rho, 1,\rho)) - \lambda_{n}(\bb{C}_{n_1,n_2}(\rho, 1,\rho))\vert$. Using closed-form expressions (\ref{autovaloribc}), it can be shown that, if $\rho < 0$, then ${\epsilon_n \in \Theta({1}/{n})}$, regardless of the parity of $n_1$ and $n_2$. If $\rho > 0$, we need to distinguish between the following three cases:
\begin{itemize}
\item If $n_1$ and $n_2$ are even, then ${\epsilon_n \in \Theta({1}/{n})}$ and $\lambda_{n}(\bb{C}_{n_1,n_2}(\rho, 1,\rho))\leq\lambda_{n}(\bb{T}_{n_1,n_2}(\rho, 1,\rho))$. In addition, $\lambda_{n}(\bb{C}_{n_1,n_2}(\rho, 1,\rho)) = C(\rho)$, where $C(\rho)$ is a constant not depending on~$n$.
\item If exactly one of $n_1$ or $n_2$ is even, then the same rate of convergence of the previous case holds and $\lambda_{n}(\bb{C}_{n_1,n_2}(\rho, 1,\rho))\leq\lambda_{n}(\bb{T}_{n_1,n_2}(\rho, 1,\rho))$,  but $\{\lambda_{n}(\bb{C}_{n_1,n_2}(\rho, 1,\rho))\}_{n\in\N}$ is not a constant sequence.
\item If $n_1$ and $n_2$ are odd, then ${\epsilon_n \in \Theta({1}/{n^{1.5}})}$, but, in contrast to the previous two cases, $\lambda_{n}(\bb{C}_{n_1,n_2}(\rho, 1,\rho))\geq\lambda_{n}(\bb{T}_{n_1,n_2}(\rho, 1,\rho))$.
\end{itemize}
The proof of the latter three claims is analogous to the ones of the transect case. For what follows, it is important to note that the asymptotic behavior of the error depends on the parity of $n_1$ and $n_2$ and on the sign of $\rho$. We also point out that similar results are discussed in \citep{GMRFbook} with respect to the maximum likelihood estimator of (univariate) GMRFs and the dimension of the associated lattice.

We now return to the more general case of approximating the precision (\ref{matriceprecisione}) with the matrix (\ref{bcapproximation}) in the bivariate case over a regular lattice. We will provide numerical evidence of the fact that the rate of convergence in Theorem \ref{Reinhardconjecture} is in the form $\Theta(1/n)$. Similarly to what was observed for a univariate field over a regular lattice, the pattern of convergence of the minimum eigenvalue of the perturbed precision matrix (\ref{bcapproximation}) to the minimum eigenvalue of the precision (\ref{matriceprecisione}) will depend on both the parity of $n_1$ and $n_2$ and on the value of $\bb{\theta}$. Due to the higher dimension of the set (\ref{insiemevalidita}), the fluctuation pattern as $n$ grows will require a more thorough study. Apart from this, we recall that in Remark \ref{remarkfalsepositives} we briefly discussed the importance in the applications of the absence of \lq\lq false positives.\rq\rq\ In what follows, we will also take this property into account.

We consider the following numerical experiment. First, we uniformly sample $\bb{\theta}_1,\bb{\theta}_2,\dots,\bb{\theta}_N$ from $[-1, 1]^5$, which can be shown to be a superset of (\ref{insiemevalidita}) for any $n_1,n_2\in\N$. Second, for each of these draws, we consider increasing grid sizes, and we numerically determine the minimum eigenvalues of the perturbed precision (\ref{bcapproximation}) and of the precision matrix for each of them (\ref{matriceprecisione}). For the former eigenvalue, we use the closed-form expressions (\ref{autovaloribc}). For the latter eigenvalue, we use the \texttt{eigs} command of the \texttt{R} package \texttt{rARPACK} \cite{bruttacopiamatlab}, which implements the Lanczos algorithm \cite{Lanczos50aniterative} to determine the extreme eigenvalues (and possibly eigenvectors) of a square complex matrix. Third, we numerically compute the constant $C(\bb{\theta})$ introduced in Lemma \ref{lemma ulteriore} and focus on the quantities 
\begin{eqnarray*}
\epsilon_{n_1,n_2}(\bb{\theta}) & = & \big\vert \lambda_{2n}(\bb{\widetilde{Q}}_{n_1,n_2}(\bb{\theta})) - \lambda_{2n}(\bb{{Q}}_{n_1,n_2}(\bb{\theta}))\big\vert,\\
\delta_{n_1,n_2}(\bb{\theta}) & = & \vert \lambda_{2n}(\bb{{Q}}_{n_1,n_2}(\bb{\theta})) - C(\bb{\theta})\vert.
\end{eqnarray*}
The second quantity was introduced because we will provide numerical evidence of the fact that $\lambda_{2n}(\bb{\widetilde{Q}}_{n_1,n_2}(\bb{\theta}))\leq\lambda_{2n}(\bb{{Q}}_{n_1,n_2}(\bb{\theta}))$ definitely, hence $\epsilon_{n_1,n_2}(\bb{\theta})\leq\delta_{n_1,n_2}(\bb{\theta})$ definitely due to chain (\ref{catenageniale}). Moreover, $\delta_{n_1,n_2}(\bb{\theta})$ exhibits an asymptotic convergence pattern that does not depend on the parity of $n_1,n_2$ or on the value of $\bb{\theta}$, in contrast to $\epsilon_{n_1,n_2}(\bb{\theta})$. 

In the following, we set $N = 500$ and considered the grid sizes $50\times 50$ to $300\times 300$, for a total of 251 increasing values. For each of the 500 drawn values of $\bb{\theta}$ we fitted a least-squares line of $\log_{10}{(\delta_{n_1,n_2}(\bb{\theta}))}$ against $\log_{10}{(n_1n_2)}$ (recall that the precision matrix (\ref{matriceprecisione}) has size $2n_1 n_2\times 2n_1 n_2$). As a measure of the quality of the fit, we considered the (non-adjusted) ~$R^2$. The minimum observed $R^2$ was 0.999, which provides very strong evidence for a rate of convergence of $\delta_{n_1,n_2}(\bb{\theta})$ in the form $\Theta(1/n^\alpha)$, without any fluctuations. Furthermore, since the empirical distribution of the $500$ estimated slopes had median $-0.992$ and the first and third quartiles were $-1.002$ and $-0.989$, respectively, we empirically conclude that $\delta_{n_1,n_2}(\bb{\theta})\in\Theta(1/n)$ or, equivalently, that $\alpha = 1$. 

The convergence pattern of $\lambda_n(\widetilde{\bb{Q}}_{n_1,n_2}(\bb{\theta}))$ to $\lambda_n({\bb{Q}}_{n_1,n_2}(\bb{\theta}))$ is more complicated and depends on the value of $\bb{\theta}$, and on the parity of $n_1$ and $n_2$. In Figure \ref{differentialplot}, three values thereof are displayed.
\begin{figure}
\centering
\vspace*{-6pt}
\includegraphics[scale = 0.69]{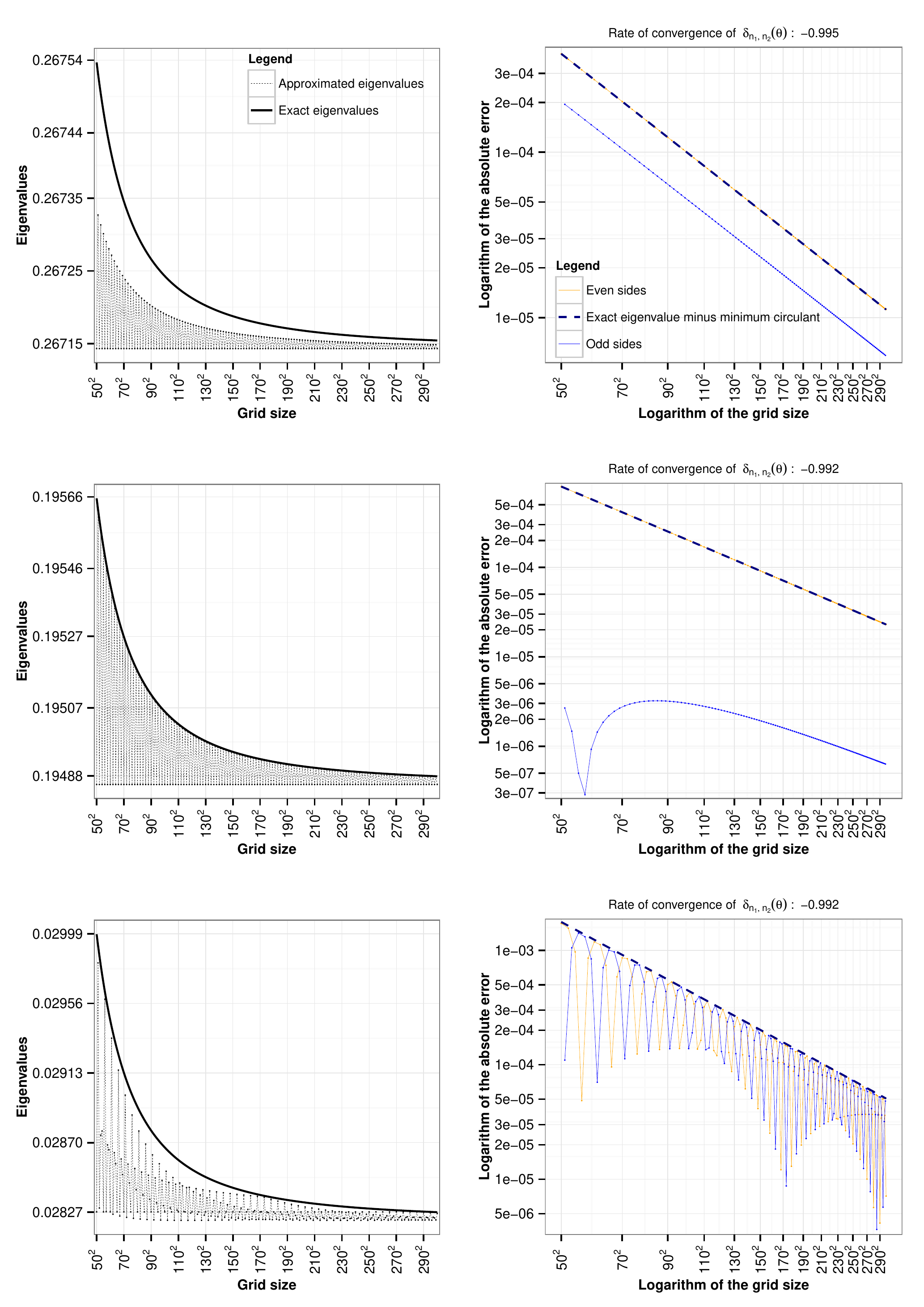}
\caption{in the left column, the convergence pattern of $\lambda_{n}(\widetilde{\bb{Q}}_{n_1,n_2}(\bb{\theta}))$ to $\lambda_{n}({\bb{Q}}_{n_1,n_2}(\bb{\theta}))$ against increasing grid sizes is displayed for three different values of $\bb{\theta}$. In the right column, the correspondent errors $\epsilon_{n_1, n_2}(\cdot)$, and $\delta_{n_1,n_2}(\cdot)$ are plotted on a log-log scale.}
\label{differentialplot}
\end{figure}
They were chosen in order to provide a meaningful overview of all observed convergence patterns. The left column shows the obtained results on the usual scale, while the right column displays the convergence of $\delta_{n_1,n_2}(\bb{\theta})$ and $\epsilon_{n_1,n_2}(\bb{\theta})$ to zero on a log-log scale. As stated above, the convergence pattern of the former quantity is a straight line of slope $-1$, whereas the pattern of the latter depends on both the parity of $n_1,n_2$ and the value of $\bb{\theta}$. This fact is very similar to what has already been observed in the beginning of this section for univariate fields over a transect and a regular lattice. More precisely, with respect to the first considered value of $\bb{\theta}$, it holds that ${\lambda_{n}(\widetilde{\bb{Q}}_{n_1,n_2}(\bb{\theta})) = C(\bb{\theta})}$ if and only if $n_1$ and $n_2$ are even, whereas $\lambda_{n}(\widetilde{\bb{Q}}_{n_1,n_2}(\bb{\theta}))\to C(\bb{\theta})$, but the sequence is not constant, if and only if $n_1$ and $n_2$ are odd. In addition, the absolute error $\epsilon_{n_1,n_2}(\bb{\theta})$ corresponds to a line with slope $-1$ in the log-log scale, with the intercept dependent on the parity of $n_1$ and $n_2$, and $\lambda_{n}(\widetilde{\bb{Q}}_{n_1,n_2}(\bb{\theta}))\leq \lambda_{n}({\bb{Q}}_{n_1,n_2}(\bb{\theta}))$ for every considered grid size. The second considered value of $\bb{\theta}$ exhibits a similar behavior for grid sizes with both $n_1$ and $n_2$ even. The spike that can be seen for odd values of $n_1$ and $n_2$ corresponds to the grid size at which the sign of ${\lambda_{n}(\widetilde{\bb{Q}}_{n_1,n_2}(\bb{\theta}))-\lambda_{n}({\bb{Q}}_{n_1,n_2}(\bb{\theta}))}$ changes. For larger grid-sides, it holds that ${\lambda_{n}(\widetilde{\bb{Q}}_{n_1,n_2}(\bb{\theta}))\leq \lambda_{n}({\bb{Q}}_{n_1,n_2}(\bb{\theta}))}$. Finally, the third considered value of $\bb{\theta}$ shows the most complicated convergence pattern. From the right panel, it is clear that the convergence patterns of grid sizes with even and odd values of $n_1$ and $n_2$ are similar. Differently from the above discussed cases, $\epsilon_{n_1,n_2}(\bb{\theta})$ does not (asymptotically) correspond to a line. In addition, it is empirically clear from the left panel that $\lambda_{n}(\widetilde{\bb{Q}}_{n_1,n_2}(\bb{\theta}))\leq \lambda_{n}({\bb{Q}}_{n_1,n_2}(\bb{\theta}))$, hence $\epsilon_{n_1,n_2}(\bb{\theta})\leq\delta_{n_1,n_2}(\bb{\theta})$. In other words, regardless of the complex fluctuation pattern, the absolute error is bounded from above by $\delta_{n_1,n_2}(\bb{\theta})$, which has the nice convergence pattern discussed above.

To summarize, in this section, we provided strong empirical evidence of the fact that the rate of convergence in Theorem \ref{Reinhardconjecture} is $\Theta(1/n)$. In addition, it holds that, asymptotically, ${\lambda_{n}(\widetilde{\bb{Q}}_{n_1,n_2}(\bb{\theta}))\leq \lambda_{n}({\bb{Q}}_{n_1,n_2}(\bb{\theta}))}$. As discussed in Remark~\ref{remarkfalsepositives}, this property implies that, for large grid sizes, our methodology does not lead to \lq\lq false positives.\rq\rq

\clearpage
\section{Discussion and Outlook}
\label{Discussion and Outlook}
In this paper, asymptotically closed-form expressions for the determination of the valid parameter space (\ref{insiemevalidita}) through a suitable approximation (\ref{insiemevaliditaperturbato}) were provided. The importance of this result in the applications is twofold. First, it provides an efficient tool in order to efficiently sample from (\ref{insiemevalidita}). This task is crucial, for instance, in the framework of a Metropolis–Hastings within the Gibbs sampler aimed at analyzing multivariate and highly correlated data with the model introduced in \cite{Sain:Furr:Cres:11} or extensions thereof. Second, current work involves the generalization of the convergence results discussed in \cite{CIT-006} by means of the eigenvalues (\ref{autovaloribc}) and Theorem~\ref{Reinhardconjecture}. The goal is to efficiently evaluate quadratic forms and log-determinants involving matrices of the form (\ref{matriceprecisione}) without the need for computing the Cholesky decomposition of large sparse matrices, for example with the routines provided by \texttt{spam} \cite{spam}. This is a very demanding task from a computational point view. The need for determining the Cholesky factor when sampling from the latent field $\bb{z}$ of the form (\ref{secondo layer}) can also be avoided by exploiting the results discussed in \cite{Aune2013}.

The rate of convergence in Theorem \ref{Reinhardconjecture} was numerically assessed to be $\Theta(1/n)$. Moreover, we provided evidence of the fact that, asymptotically, $\lambda_{2n}(\bb{\widetilde{Q}}_{n_1,n_2}(\bb{\theta}))\leq
\lambda_{2n}(\bb{Q}_{n_1,n_2}(\bb{\theta}))$. In the applications, this avoids \lq\lq false positives\rq\rq\ in the approximation of (\ref{insiemevalidita}) with (\ref{insiemevaliditaperturbato}). In fact, if $\lambda_{2n}(\widetilde{\bb{Q}}_{n_1,n_2}(\bb{\theta}))>0$, namely $  \widetilde{\bb{Q}}_{n_1,n_2}(\bb{\theta})$ is positive-definite, then $\lambda_{2n}({\bb{Q}}_{n_1,n_2}(\bb{\theta}))>0$; hence, ${\bb{Q}}_{n_1,n_2}(\bb{\theta})$ is also positive-definite.

The closed-form expressions (\ref{autovaloribc}) allow us to inspect the geometrical properties of the parameter space (\ref{insiemevalidita}) (e.g., connection and compactness), which play a central when implementing likelihood function optimizers as discussed, e.g., in \citep{bb47473, 54906}. Some preliminary insight of the shape of the set (\ref{insiemevalidita}) is already provided in Figure \ref{figuradd}. Several theoretical results on the geometrical structure of the valid parameter space (\ref{insiemevalidita}) were also obtained, but they are not included in this paper because they are beyond its scope.

From a statistical perspective, the model introduced in \cite{Sain:Furr:Cres:11} is such that the correlation structure of different locations on the lattice depends only on the parameters $\rho_{11},\rho_{22},\rho_{12}$, and $\rho_{21}$. Our knowledge of the valid parameter space (\ref{insiemevalidita}) will allow us to add a further layer to the above-mentioned model, which is in the form (\ref{modellogenerale}). For instance, $\bb{\theta}$ could be written as a suitable function of latitude and longitude. This function must ensure that the precision matrix (\ref{matriceprecisione}) is positive-definite, hence the importance of the knowledge of the eigenvalues (\ref{autovaloribc}).

The strategy used in Section \ref{The Main Result} to prove the main Theorem~\ref{Reinhardconjecture}, namely the introduction of the chain (\ref{catenageniale}), will be used to prove similar results for the more general case of $p$-variate GMRFs. 
The associated precision matrix can be partitioned in $p^2$ block-Toeplitz blocks of size ${n_1n_2\times n_1n_2}$, but, depending on the parametrization of the GMRF, it may not be block-Toeplitz. Once again, we point out that no extensions of the classical convergence results described in Remark~\ref{remark altre convergenze} are (to our knowledge) available for this framework. We will therefore apply a perturbation similar to the one introduced in equation (\ref{bcapproximation}) and then obtain a closed-form expression similar to the eigenvalues (\ref{autovaloribc}). Apart from this, another interesting research question to address would be to explore the applicability of our approach for neighborhood structures of higher order in the lattice.

\section*{Acknowledgments}
This work was supported by the Swiss National Science Foundation, grant 143282. The first author would like to thank Antonio De Rosa, Giuseppe Graziani, and Salvatore Stuvard for the helpful discussions while developing the proof of the main result.

The content of this paper is part of the first author’s Ph.D. dissertation supervised by the second author.

\renewcommand{\baselinestretch}{1.}\rm

\bibliographystyle{as}
\bibliography{mybib}
\end{document}